\newcommand{\mycomment}[1]{}
\renewcommand\footnotetextcopyrightpermission[1]{}
\newtheorem{theorem}{Theorem}[section]
\newtheorem{Assumption}{Assumption}
\newtheorem{Definition}{Definition}
\newcommand{\ASImplementation}[1]{}
\definecolor{lightgray}{gray}{0.95}
\newlist{casesp}{enumerate}{3} 
\setlist[casesp]{align=left, 
                 listparindent=\parindent, 
                 parsep=\parskip, 
                 font=\normalfont\bfseries, 
                 leftmargin=0pt, 
                 labelwidth=0pt, 
                 itemindent=.4em,labelsep=.4em, 
                 partopsep=0pt, 
                 }
\setlist[casesp,1]{label=Case~\arabic*:,ref=\arabic*,leftmargin=0pt}
\setlist[casesp,2]{label=Case~\thecasespi.\roman*:,ref=\thecasespi.\roman*,leftmargin=10pt}
\setlist[casesp,3]{label=Case~\thecasespii.\alph*:,ref=\thecasespii.\alph*,leftmargin=20pt}
\begin{document}
\title{LARK - Linearizability Algorithms for Replicated Keys in Aerospike}


\author{Andrew Gooding}
\orcid{0009-0009-5848-8356}
\authornote{This work was performed at Aerospike.}
\affiliation{%
  \institution{Consultant}
  \streetaddress{}
  \city{Mountain View}
  \state{California}
  \postcode{94040}
}
\email{gooding470@hotmail.com}

\author{Kevin Porter}
\orcid{0009-0004-3698-3923}
\affiliation{%
  \institution{Aerospike}
  \streetaddress{}
  \city{Mountain View}
  \state{California}
  \postcode{94040}
}
\email{kporter@aerospike.com}

\author{Thomas Lopatic}
\orcid{0009-0007-7952-6599}
\authornotemark[1]
\affiliation{%
  \institution{Consultant}
  \streetaddress{}
  \city{Berlin}
  \state{Germany}
  \postcode{}
}
\email{thomas@lopatic.de}

\author{Ashish Krishnadeo Shinde}
\orcid{0009-0006-5797-3745}
\authornotemark[1]
\affiliation{%
  \institution{Divyam AI}
  \city{Bengaluru}
  \country{India}
}
\email{omkarashish@gmail.com}

\author{Sunil Sayyaparaju}
\orcid{0009-0003-1842-6739}
\affiliation{%
  \institution{Aerospike}
  \city{Bengaluru}
  \country{India}
}
\email{sunil@aerospike.com}

\author{Srinivasan Seshadri}
\orcid{0009-0000-1450-1111}
\affiliation{%
  \institution{Aerospike}
  \streetaddress{}
  \city{Mountain View}
  \state{California}
  \postcode{}
}
\email{sseshadri@aerospike.com}

\author{V. Srinivasan}
\orcid{0009-0007-1336-3555}
\affiliation{%
  \institution{Aerospike}
  \streetaddress{}
  \city{Mountain View}
  \state{California}
  \postcode{}
}
\email{srini@aerospike.com}

\renewcommand{\shortauthors}{Gooding et. al.}


\begin{abstract}
We present \textbf{LARK} (Linearizability Algorithms for Replicated Keys), a synchronous replication protocol that achieves linearizability while minimizing latency and infrastructure cost, at significantly higher availability than traditional quorum-log consensus. LARK introduces \emph{Partition Availability Conditions (PAC)} that reason over the entire database cluster rather than fixed replica sets, improving partition availability under independent failures by roughly \(3\times\) when tolerating one failure and \(10\times\) when tolerating two. Unlike Raft, Paxos, and Viewstamped Replication, LARK eliminates ordered logs, enabling immediate partition readiness after leader changes—with at most a per-key duplicate-resolution round trip when the new leader lacks the latest copy. Under equal storage budgets—where both systems maintain only \(f{+}1\) data copies to tolerate \(f\) failures—LARK continues committing through data-node failures while log-based protocols must pause commits for replica rebuilding. These properties also enable zero-downtime rolling restarts even when maintaining only two copies. We provide formal safety arguments and a TLA+ specification, and we demonstrate through analysis and experiments that LARK achieves significant availability gains.
\end{abstract}


\begin{CCSXML}
<ccs2012>
   <concept>
       <concept_id>10002951.10002952.10003190.10003195.10010836</concept_id>
       <concept_desc>Information systems~Key-value stores</concept_desc>
       <concept_significance>500</concept_significance>
       </concept>
 </ccs2012>
\end{CCSXML}

\ccsdesc[500]{Information systems~Key-value stores}

\keywords{Replication, Strong Consistency, Linearizability}

\maketitle

\section{Introduction}
\label{sec:intro}

Distributed databases increasingly serve latency-sensitive applications that demand both \emph{high availability} and \emph{strong consistency}, even in the presence of failures. Examples include online advertising, gaming, financial services, and personalization systems where even brief stalls can degrade user experience, impact revenue, or violate strict service-level objectives (SLOs). These systems often operate at sub-millisecond latency targets and must minimize downtime during both planned and unplanned events.  

Achieving linearizable reads and writes at this scale is traditionally done via quorum-log consensus protocols such as Paxos~\cite{lamport2001paxos,lamport1998tpp,vanrenesse2015moderately}\footnote{In this paper, “Paxos” refers to the log-backed SMR variant (\emph{Multi-Paxos}).}
, Raft~\cite{ongaro2014search}, or Viewstamped Replication (VR)~\cite{liskov12vr}. These approaches elect a leader per partition (or shard) and replicate writes to a quorum of replicas via an \emph{ordered log}. However, these designs impose well-known costs at scale:  
\begin{itemize}
    \item \textbf{Availability limitations:} A partition becomes unavailable if fewer than $f{+}1$ of its $2f{+}1$ configured replicas are reachable.  
    \item \textbf{Transition delays:} Leader changes require log catch-up (prefix reconciliation or snapshot replay), temporarily stalling the partition.  
    \item \textbf{Operational complexity:} Ordered logs introduce write amplification, replay overhead, and storage compaction challenges.
\end{itemize}

These challenges are particularly acute in cost-sensitive deployments that minimize replication factors (\(RF\)) to control infrastructure costs while relying on fast but expensive storage like NVMe SSDs. Reducing \(RF\) is desirable but worsens unavailability under quorum-log protocols. \textbf{LARK} (Linearizability Algorithms for Replicated Keys) addresses this tension directly.  

\subsection*{Introducing LARK}

\textbf{LARK} (Linearizability Algorithms for Replicated Keys) is the synchronous replication design in the Aerospike database~\cite{aerospike:book2024, as:vldb2016, as:vldb2023, as:vldb2012, as:sigmod2025}. Based on deployment requirements of our customers, the primary design goals of LARK are to provide linearizability with minimal latency and infrastructure cost while maximizing availability. 
Therefore, LARK replaces per-partition quorum logs with \emph{Partition Availability Conditions (PAC)} and a \emph{log-free} state-replication path. PAC broadens availability beyond replica-set majority by reasoning over the \emph{database-wide cluster} and significantly expands the conditions under which partitions remain safely available. LARK removes ordered logs entirely. Writes are applied directly to the key-value store, and correctness is ensured via \emph{logical clocks} and per-key \emph{duplicate-resolution checks}. After leader changes, keys for which the new leader holds the \emph{latest committed copy} become \emph{immediately ready}; others complete a short duplicate-resolution round trip, avoiding log catch-up. Reads never depend on log indices or replay, simplifying the steady-state path. We store exactly $RF\!=\!f{+}1$ copies (to tolerate $f$ failures), and \emph{re-replication (migration)} of a replacement replica runs asynchronously in the background, so commit progress never hinges on bringing a spare to log parity.

\paragraph{Increased Availability from PAC}
PAC introduces four conditions under which a partition stays available (Section~\ref{sec:partitionavailability}). For example, a \emph{simple-majority} condition that makes a partition available whenever a majority of database nodes are up and at least one \emph{full replica} (i.e., a replica holding the latest committed copy of all records in the partition) is reachable. Under independent node failures, the \emph{simple-majority} condition alone  delivers significant gains. Our analysis and simulations (Section~\ref{sec:expts}) show that LARK improves partition availability by roughly $3\times$ at $RF = 2$ and $10\times$ at $RF = 3$ compared to quorum-log systems. 
PAC also includes a \emph{super-majority} condition that enables zero-downtime rolling restarts even at $RF{=}2$. Because PAC is independent of any single fixed replica set, partitions are not stranded simply because some preconfigured members are temporarily missing.

\paragraph{Equal storage: commits without log catch-up.}
A second, distinct availability benefit arises under an \emph{equal storage budget}, where both systems maintain only $f{+}1$  data copies\footnote{To ensure all schemes have equal storage, we assume quorum-log schemes only have $f{+}1$ log-persisting data replicas (with up to $2f{+}1$ voters overall), following common cost-reduction patterns in Paxos-family deployments~\cite{lamport2004cheap}.}. In quorum-log designs, even with $2f{+}1$ voters, losing one data replica leaves only $f$ log-persisting voters; the leader cannot commit new entries until a spare voter is \emph{caught up on the log} (typically via snapshot/state-transfer plus backfill), creating a no-commit window~\cite{ongaro2014raft}. LARK, in contrast, continues committing new writes immediately while the replacement copy re-replicates in the background.
In time-series microbenchmarks with a 5-minute outage (Section~\ref{sec:expts}), LARK sustains service throughout while the baseline pauses for roughly \texttt{partition\_size}/\texttt{network\_bandwidth} seconds; when both serve, latencies are comparable.

\paragraph{Zero-downtime rolling restarts with $RF{=}2$.}
Under \emph{SuperMajority} (fewer than $RF$ nodes unavailable), rolling restarts proceed with no downtime at $RF{=}2$: when one original replica reboots, the other serves with an \emph{interim} second copy; upon return they swap roles; when both originals are back, the interim retires. The interim accepts only new updates (no historical backfill), so when originals return only accrued deltas flow. In quorum-log systems, an interim must first catch up (log and/or snapshot) before accepting writes, extending the maintenance window~\cite{ongaro2014raft}.

\paragraph{Write continuity during leadership changes.}
Because LARK tolerates a bounded view skew (at most one regime) between nodes, many in-flight operations around a leader change complete without client retry. The correctness argument later uses this explicit bound.

\paragraph{RSM scope.}
LARK implements a replicated state machine at \emph{per-key} granularity rather than maintaining a per-partition ordered log. This scope matches our target workloads, which require linearizability of individual key values and benefit from immediate leader readiness and low operational overhead.

\paragraph{Record-size limitation.}
The efficiency of record writes when one of the replicas does not have the latest copy relies on the size of the record as opposed to the size of the update to the record\footnote{Aerospike limits records to a maximum of 8MB in size currently.}. 

LARK has been deployed for years in production \textsc{Aerospike} clusters, validating correctness and operational benefits at scale. Aerospike’s strong-consistency mode was independently evaluated by Jepsen in 2018~\cite{jepsen-aerospike-39903}; Aerospike subsequently described fixes in version~4.0~\cite{aerospike-jepsen-40}. Enabling linearizable reads under LARK adds only modest overhead—often about one additional intra-cluster RTT on common paths—relative to eventual-consistency mode~\cite{aerospike-jepsen-40}. This paper formalizes and generalizes the synchronous-replication design we call LARK and documents improvements made since that evaluation.

The rest of the paper is organized as follows:
Section~\ref{sec:aerospike-architecture} describes the system model and definitions. Section~\ref{sec:partitionavailability} presents (PAC) partition availability conditions. Section~\ref{sec:lark-algorithm} details the LARK algorithm. Section~\ref{sec:expts} reports the experimental results. Section~\ref{sec:related} reviews related work. Section~\ref{sec:conclusion} concludes the paper. Appendices provide proofs, additional experiments, and auxiliary analysis.

\section{System Model and Definitions}
\label{sec:aerospike-architecture}
\begin{figure}
\centering
\includegraphics[width=1.0\linewidth]{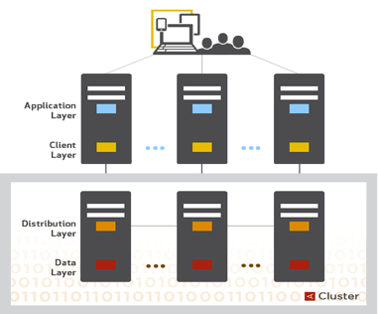}
\caption{Aerospike Cluster Architecture}
\label{fig:as_arch}
\end{figure}

The Aerospike real-time database cluster~\cite{as:vldb2023} is the operational substrate for LARK (Figure~\ref{fig:as_arch}). We highlight four properties relevant to this paper:
\begin{itemize}
  \item \textbf{Shared-nothing nodes:} all nodes are identical peers with local storage.
  \item \textbf{Namespaces:} records live in \emph{namespaces}\footnote{Namespaces resemble tablespaces; within a namespace, \emph{sets} are analogous to tables.}. Unless noted, discussion refers to records within a single namespace.
  \item \textbf{Uniform partitioning:} keys are mapped to a fixed number of partitions, preventing hotspots.
  \item \textbf{One-hop clients:} intelligent clients cache the partition$\rightarrow$leader mapping and route requests directly.
\end{itemize}

\begin{figure}[h]
\centering
\includegraphics[width=1.0\linewidth, height=3in]{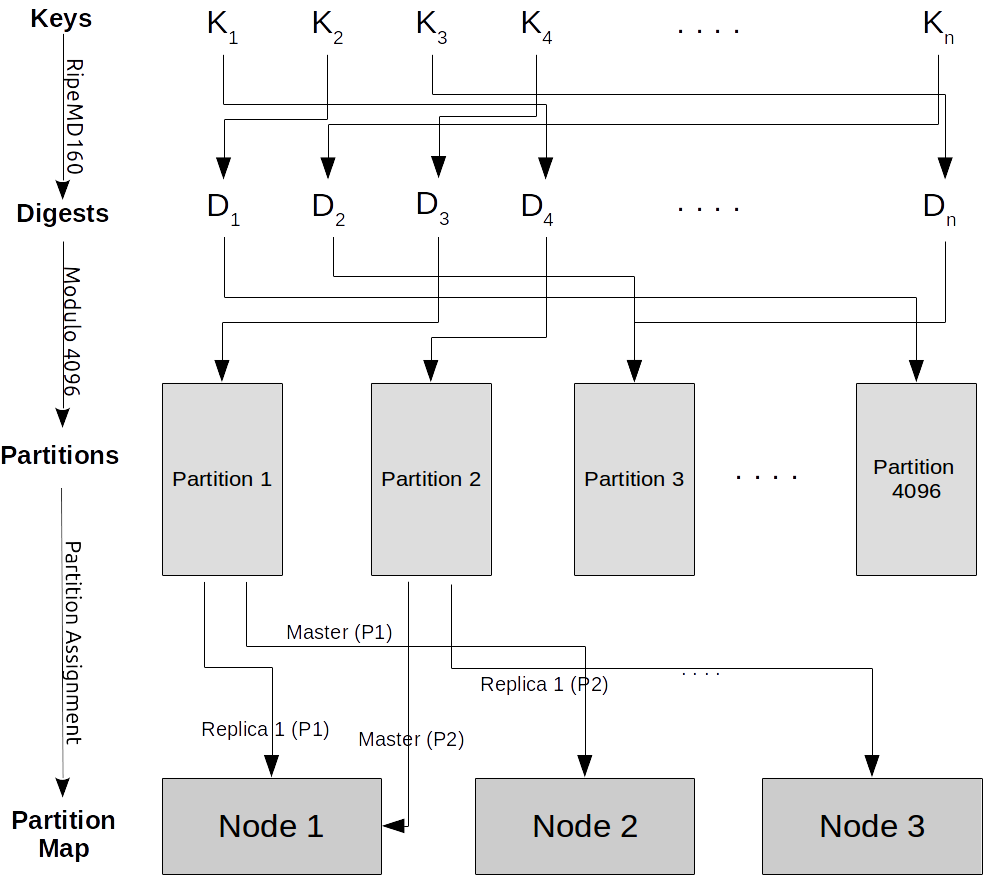}
\caption{Data Partitioning}
\label{fig:data_partitioning}
\end{figure}

\subsection{Data Partitioning and Placement}

Aerospike distributes data uniformly across nodes (Figure~\ref{fig:data_partitioning}). A record’s primary key is hashed to a 160-bit digest using RIPEMD-160~\cite{infosec:ripemd}. The digest space is partitioned into 4096 non-overlapping partitions, which are the unit of placement. Records are assigned to partitions by hashing their primary keys; even with skewed key distributions, the induced distribution over digests—and thus partitions—is uniform.

\begin{figure}
\centering
\includegraphics[width=1.0\linewidth, height=2in]{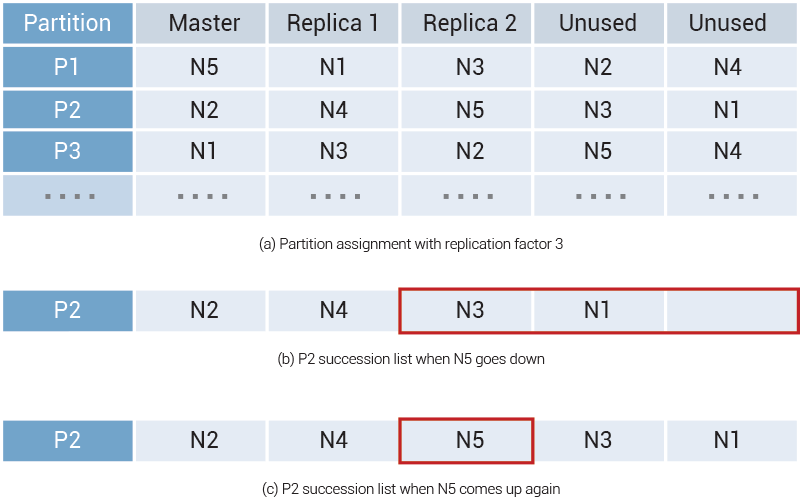}
\caption{Mapping Partitions to Nodes}
\label{fig:succession_list}
\end{figure}

Let $RF$ be the replication factor ($RF = f{+}1$ tolerates $f$ failures). Partitions select their \emph{roster replicas} via Rendezvous hashing~\cite{tr:rendezvous}:
\begin{enumerate}
 \item For each partition $P$ and node $N$, compute a score on $(P,N)$ using a hash.\footnote{Any collision-resistant hash suffices.}
  \item Sort nodes by score to obtain $P$’s \emph{succession list}.
  \item The first $RF$ nodes form the \emph{roster replicas}; the first is the \emph{roster leader} (displayed as Master in Figures~\ref{fig:data_partitioning} and~\ref{fig:succession_list}). When we need not distinguish, we refer to leader and followers collectively as \emph{roster replicas}.
\end{enumerate}
Figure~\ref{fig:succession_list}(a) shows the assignment for a 5-node cluster with $RF{=}3$. LARK is placement-agnostic: any deterministic scheme that yields a per-partition succession list is acceptable.

\subsection{Clustering}
\label{subsec:reclustering}

For expository clarity, we first assume a fixed roster and analyze node up/down dynamics; roster changes are handled in Section~\ref{subsec:rosterchanges}.

Nodes exchange periodic heartbeats to maintain membership. On connectivity changes, a \emph{reclustering} step identifies disjoint \emph{clusters}, each a maximal set of nodes with full mutual reachability. Nodes in a cluster run one consensus round to agree on \texttt{ClusterMembers} (the cluster view) and mint a monotonically increasing \emph{exchange number}. All nodes in the same cluster adopt the same exchange number.

Given \texttt{ClusterMembers}, each node independently computes a partition’s \emph{cluster replicas}: the first $RF$ nodes in the partition’s succession list that are present in the cluster. A deterministic tie-break then selects the \emph{cluster leader} as described in Section~\ref{subsec:rebalance}.

Figure~\ref{fig:succession_list} illustrates a common scenario. When node N5 fails, it is removed from the succession lists (e.g., for partition P2), causing a left shift so that N3 assumes N5’s roster position and P2’s records are migrated to N3 (Figure~\ref{fig:succession_list}(b)). When N5 returns, it regains its position (Figure~\ref{fig:succession_list}(c)). If a partition had no replica on N5 (e.g., P3), no migration is required. Adding a brand-new node inserts it into each succession list, right-shifting lower-ranked nodes; assignments to the left remain unchanged.

\section{Partition Availability Conditions (PAC)}
\label{sec:partitionavailability}

LARK declares a partition \emph{available} after each reclustering step if at least one of a small set of cluster-scoped predicates holds. These \emph{Partition Availability Conditions (PAC)} reason over the \emph{database-wide} cluster rather than a fixed replica set, which is the source of LARK's availability advantage.
\paragraph{Full replica.}
A node is \emph{full} for partition $P$ if it holds the latest committed version of every record in $P$.

A partition $P$ is \textbf{available in a cluster} if \emph{any one} of the following holds:
\begin{enumerate}[leftmargin=1.2em]
  \item \textbf{SuperMajority:} The cluster contains a strict majority of roster nodes and fewer than $RF$ roster nodes are missing. (Hence at least one roster replica is present.)
  \item \textbf{AllRosterReplicas:} All $RF$ roster replicas of $P$ are present in the cluster. 
  \item \textbf{SimpleMajority:} The cluster contains a strict majority of roster nodes, includes at least one roster replica of $P$, and at least one node is \emph{full} for $P$.
  \item \textbf{HalfRoster:} Exactly half of the roster nodes are present, the roster leader of $P$ is present, and at least one node is \emph{full} for $P$.
\end{enumerate}
\paragraph{Regime.}
Each reclustering step assigns the cluster a monotonically increasing \emph{exchange number}. For partition $P$, we refer to the exchange number in effect when it is available and serving requests as $P$’s \emph{regime number (PR)}.

\subsection{PAC Safety}
\label{subsec:pac-correctness}

Safety is proved by establishing two properties PAC must guarantee for each partition $P$:

\begin{enumerate}[leftmargin=1.2em]
  \item \textbf{Leader uniqueness.} At any time, at most one cluster in the system that satisfies PAC for $P$ can \emph{successfully} serve reads and writes.
  \item \textbf{Access to latest state.} The (unique) serving leader must have access to the latest committed version of every record in $P$.
\end{enumerate}

These are proved through a sequence of lemmas (proofs in Appendix~\ref{app:proof}).

\begin{lemma}
\label{lemma:mainonerosterreplica}
Any cluster that satisfies one of the PAC rules for a given partition must include at least one roster replica of that partition.
\end{lemma}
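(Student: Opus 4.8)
The plan is to prove the claim by a case analysis over the four PAC rules, showing that each rule \emph{individually} forces at least one of partition $P$'s $RF$ roster replicas to be present in the cluster. Since availability is defined as the disjunction of the four rules, establishing the conclusion under each rule separately suffices. Two of the four cases fall out immediately from the statements of the rules themselves, so the real work is concentrated in the remaining two, where a short counting argument is needed.

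First I would dispatch the cases in which the conclusion is essentially part of the hypothesis. For \textbf{SimpleMajority}, the rule explicitly requires that the cluster include at least one roster replica of $P$, so there is nothing to prove. For \textbf{AllRosterReplicas}, all $RF \ge 1$ roster replicas are present by assumption, hence in particular at least one is. For \textbf{HalfRoster}, the rule requires the roster leader of $P$ to be present; since the roster leader is by definition the first of $P$'s $RF$ roster replicas, this again exhibits a present roster replica directly.

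The one case that requires an actual argument is \textbf{SuperMajority}, where the hypothesis bounds the number of \emph{missing} roster nodes rather than naming any specific present replica. I would argue by counting over the whole roster: partition $P$ has exactly $RF$ roster replicas, and these are a subset of the roster nodes; SuperMajority guarantees that fewer than $RF$ roster nodes are missing cluster-wide, so at most $RF-1$ of $P$'s roster replicas can be absent, leaving at least one present. This is precisely the content of the parenthetical remark accompanying the rule, and it is the only place where the relationship between ``total missing nodes'' and ``missing replicas of this particular partition'' must be made explicit.

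I expect no genuine obstacle here: the SuperMajority counting step is the crux, but it is elementary, and the other three cases are little more than restatements of their hypotheses. The only subtlety to state carefully is the convention that $P$'s roster replicas are the first $RF$ entries of $P$'s succession list (with the roster leader first), so that they form a subset of the roster nodes of size exactly $RF$; once that is fixed, the subset-counting argument for SuperMajority goes through immediately and the disjunction over the four rules closes the proof.
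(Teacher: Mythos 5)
Your proposal is correct and follows essentially the same route as the paper's own proof: the \textbf{AllRosterReplicas}, \textbf{SimpleMajority}, and \textbf{HalfRoster} cases hold by definition (with HalfRoster via the roster leader being a roster replica), and \textbf{SuperMajority} follows from the counting observation that fewer than $RF$ missing roster nodes leaves at least one of $P$'s $RF$ roster replicas present. Your write-up is merely more explicit about the subset-counting step, which the paper states in one line.
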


\begin{lemma}
\label{lemma:mainonecommonnode}
Let $C_1$ and $C_2$ be two distinct clusters that both satisfy PAC for a partition. Then $C_1$ and $C_2$ must share at least one node.
\end{lemma}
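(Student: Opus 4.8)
The plan is to argue by contradiction: assume $C_1$ and $C_2$ are node-disjoint, i.e.\ $C_1 \cap C_2 = \emptyset$, and derive that they cannot both satisfy PAC. I would organize the argument around the four availability conditions, grouped by the kind of quorum each one certifies over the \emph{fixed} roster. Two of the conditions, SuperMajority and SimpleMajority, each require the cluster to hold a strict majority of the roster nodes; call these the \emph{majority} conditions. HalfRoster requires exactly half of the roster nodes together with the roster leader, and AllRosterReplicas requires all $RF$ roster replicas. The proof then reduces to checking, for every pair of conditions that $C_1$ and $C_2$ might invoke, that disjointness is impossible; since there are only three ``types'' (majority, half, all-replicas), this is six unordered cases.

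First I would dispatch the cases built on intersecting quorums of the roster set. If both clusters invoke a majority condition, each contains more than half of the roster nodes; two disjoint subsets of the roster cannot each have size $> \lvert\text{roster}\rvert/2$, since their sizes would sum to more than $\lvert\text{roster}\rvert$, so they must share a roster node, contradicting disjointness. If one invokes a majority condition and the other invokes HalfRoster, then one holds more than $\lvert\text{roster}\rvert/2$ nodes and the other holds exactly $\lvert\text{roster}\rvert/2$; these counts again sum to strictly more than the roster size, forcing an overlap within the roster.

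Next I would handle the cases pinned to specific roster members. If both clusters invoke HalfRoster, each contains the \emph{unique} roster leader, so they share it. If either cluster invokes AllRosterReplicas, that cluster contains all $RF$ roster replicas; by Lemma~\ref{lemma:mainonerosterreplica} the other cluster contains at least one roster replica, and this replica necessarily lies in the AllRosterReplicas cluster as well, yielding a shared node. Here Lemma~\ref{lemma:mainonerosterreplica} serves as the bridge guaranteeing that every PAC-satisfying cluster touches the roster-replica set, which is what lets the AllRosterReplicas case absorb whatever the partner cluster does.

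The step I expect to be the crux is the HalfRoster analysis, because ``exactly half'' is the one quantity that does \emph{not} by itself force a majority-style overlap: two disjoint halves can partition an even-sized roster into a symmetric split-brain. The argument survives only because HalfRoster additionally carries the roster leader (ruling out the symmetric split of two halves) and because half plus a strict majority exceeds the whole (ruling out the mixed split against a majority cluster). The care needed is in the parity bookkeeping—HalfRoster presupposes an even roster—and in confirming that the leader and roster-replica obligations refer to the same fixed roster for both clusters, so that the counting arguments and the shared-member arguments compose into a single contradiction covering all six cases.
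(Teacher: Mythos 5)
Your proof is correct and follows essentially the same route as the paper's: a case analysis over the PAC condition types, using quorum-intersection counting for the majority/half cases, Lemma~\ref{lemma:mainonerosterreplica} whenever one cluster satisfies AllRosterReplicas, and the shared roster leader for the HalfRoster--HalfRoster pair. Your version merely makes explicit the counting (strict majority plus half exceeds the roster size) that the paper leaves implicit.
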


\begin{lemma}
\label{lemma:mainonlyonecluster}
During any regime, there is at most one cluster in the system that satisfies PAC for a given partition.
\end{lemma}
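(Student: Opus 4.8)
The plan is to argue by contradiction, using Lemma~\ref{lemma:mainonecommonnode} as the key intersection tool and the clustering invariants of Section~\ref{subsec:reclustering} to turn its conclusion into an impossibility. Suppose that within a single regime of $P$—that is, under one fixed regime number $PR$—two \emph{distinct} clusters $C_1$ and $C_2$ both satisfy PAC for $P$ and serve requests. By Lemma~\ref{lemma:mainonecommonnode}, any two distinct PAC-satisfying clusters share a node, so I immediately obtain some $N \in C_1 \cap C_2$.

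First I would invoke the reclustering model: each reclustering step partitions the reachable nodes into \emph{disjoint} clusters, and every node adopts exactly one cluster view—carrying exactly one exchange number—at any instant. Consequently $N$ cannot be a member of two coexisting clusters. This already disposes of the dangerous ``simultaneous split'' case (a network partition producing two live groups): such groups are disjoint, so Lemma~\ref{lemma:mainonecommonnode} forbids both from satisfying PAC. The only surviving possibility is that $C_1$ and $C_2$ are \emph{not} contemporaneous, and that the shared node $N$ migrated from one to the other across an intervening reclustering step. This is the step that reconciles the apparent tension between Lemma~\ref{lemma:mainonecommonnode} (which forces overlap) and the disjointness of simultaneous clusters.

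Next I would apply monotonicity of exchange numbers. Because each reclustering step mints a strictly larger exchange number, the number $N$ carries in the later of $C_1,C_2$ strictly exceeds the one it carried in the earlier. Hence $C_1$ and $C_2$ bear distinct exchange numbers and therefore belong to \emph{different} regimes of $P$, contradicting the assumption that both serve under the same $PR$. Reading the contrapositive, two PAC-satisfying clusters that share the regime number $PR$ must coincide, so at most one cluster satisfies PAC for $P$ during any regime, which is exactly Lemma~\ref{lemma:mainonlyonecluster}.

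I expect the main obstacle to be making the ``same regime'' versus ``same instant'' distinction precise enough that the single monotonicity argument cleanly absorbs both failure modes—the contemporaneous network split and the sequential membership change—since Lemma~\ref{lemma:mainonecommonnode} by itself only yields a shared node and does not say \emph{when} the two clusters exist. The care lies in pinning down that a regime is indexed by the exchange number and that exchange numbers are strictly monotone along the membership history of every individual node, so that a shared node cannot witness the same regime in two different clusters.
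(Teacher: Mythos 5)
Your proposal is correct, and its core is exactly the paper's argument: assume two distinct PAC-satisfying clusters in one regime, invoke Lemma~\ref{lemma:mainonecommonnode} to force a shared node, and contradict this with the fact that the global clustering step makes coexisting clusters disjoint. Where you go beyond the paper is in the temporal case analysis. The paper's proof considers only clusters that satisfy PAC \emph{simultaneously}; it never addresses the possibility of two non-contemporaneous clusters being tagged with the same numeric regime, which matters because two disjoint groups on opposite sides of a network split can in principle mint equal exchange numbers independently. Your second case closes this reading: a shared node (forced by Lemma~\ref{lemma:mainonecommonnode}) cannot carry the same exchange number in two successive clusters, since each reclustering it participates in strictly increases its exchange number, so sequential PAC-satisfying clusters necessarily belong to different regimes. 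The extra bookkeeping buys a proof that matches the lemma's actual statement (``during any regime'') rather than the narrower instantaneous reading, at no real cost in complexity.
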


\begin{lemma}
\label{lemma:mainclusterreplicacommon}
Let $C_1$ and $C_2$ be two clusters available for partition $P$, with regime numbers $R_1$ and $R_2$ such that $R_1 < R_2$ and no intermediate regime exists where $P$ was available. Then at least one of the cluster replicas from $C_1$ is also present in $C_2$.
\end{lemma}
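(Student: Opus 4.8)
The plan is to show the set $A$ of \emph{cluster replicas of $C_1$} meets $C_2$, via a short reduction to the already-proved lemmas together with one structural invariant about where the latest committed state can reside. Write the succession list of $P$ as $s_1,\dots,s_n$, so the roster replicas are $s_1,\dots,s_{RF}$ and $A$ is the set of the first $RF$ entries of this list lying in $C_1$. First I would dispose of the degenerate case $|C_1|\le RF$: then every node of $C_1$ is among its first $RF$ present succession entries, so $A=C_1$, and Lemma~\ref{lemma:mainonecommonnode} already forces $A\cap C_2=C_1\cap C_2\neq\emptyset$. Hence I may assume $|C_1|>RF$, i.e.\ $|A|=RF$.

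Next I would split on the PAC rule witnessing $C_2$'s availability; the first two cases are pure counting. If $C_2$ satisfies \textbf{AllRosterReplicas}, it contains every roster replica; since $A$ contains at least one roster replica by Lemma~\ref{lemma:mainonerosterreplica}, that replica lies in $A\cap C_2$. If $C_2$ satisfies \textbf{SuperMajority}, then fewer than $RF$ roster nodes are absent from $C_2$; as the $RF$ members of $A$ are all roster nodes, they cannot all be absent, so again $A\cap C_2\neq\emptyset$.

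The remaining cases, \textbf{SimpleMajority} and \textbf{HalfRoster}, each guarantee a node that is \emph{full} for $P$ inside $C_2$, and here I would reduce the claim to the invariant: \emph{at the start of regime $R_2$, every node that is full for $P$ belongs to $A$}. Granting this, the full node promised inside $C_2$ lies in $A$ and we are done. I would establish the invariant by induction over regimes, anchored by the hypothesis that no available regime lies strictly between $R_1$ and $R_2$, so that $R_1$ is the most recent available regime preceding $R_2$. During $R_1$, synchronous replication keeps the cluster replicas $A$ authoritative for $P$, while former replicas relinquish their superseded or dropped copies, so at the close of $R_1$ the full nodes are contained in $A$. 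Across each intervening unavailable regime there are no new commits and no new full nodes, so the containment is preserved up to the start of $R_2$.

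The main obstacle is exactly this last invariant—controlling the full set across the unavailable regimes between $R_1$ and $R_2$. I must rule out that background re-replication (migration) confers fullness on some node outside $A$ while $P$ is unavailable, and that a stale former replica lingers as a full node. The argument I expect to need is that migration is gated on $P$'s availability: an unavailable cluster has no confirmed authoritative copy from which to certify a new full node, so the full set can only shrink (through node loss) while $P$ is unavailable, and a node demoted from the cluster-replica set drops its copy once migration to the current replicas completes. Pinning down this operational behavior, rather than the combinatorial cases above, is where the real work lies.
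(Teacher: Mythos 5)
Your proof takes essentially the same route as the paper's: a case split on which PAC rule $C_2$ satisfies, handling \textbf{AllRosterReplicas} via Lemma~\ref{lemma:mainonerosterreplica}, \textbf{SuperMajority} by counting missing roster nodes, and \textbf{SimpleMajority}/\textbf{HalfRoster} by arguing that the full node guaranteed inside $C_2$ must have been a cluster replica of $C_1$. The paper asserts that last fact in a single clause (it follows from the rebalance mechanics: nodes mark themselves not full and skip migration in regimes where $P$ is unavailable, so fullness entering $R_2$ can only have been earned by $C_1$'s replicas), so your explicit invariant, induction over the intervening unavailable regimes, and the degenerate-case and counting details are simply a more careful rendering of the same argument.
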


\section{LARK Algorithm}
\label{sec:lark-algorithm}

\begin{table*}[t]
\small
\centering
\caption{Notation and metadata used by algorithms and proofs}
\label{tab:notation}
\begin{tabular}{@{}llp{0.62\textwidth}@{}}
\toprule
\textbf{Notation} & \textbf{Scope} & \textbf{Meaning} \\
\midrule
\multicolumn{3}{@{}l}{\textit{Cluster identifiers}} \\
\addlinespace[2pt]
$RF$ & cluster & Replication factor ($f{+}1$). \\
$P$ & partition & Partition identifier. \\
\midrule
\multicolumn{3}{@{}l}{\textit{Global/partition clocks}} \\
\addlinespace[2pt]
$ER$ & node & \textit{Exchange number} (cluster epoch minted by reclustering). \\
$PR$ & node$\times$partition & \textit{Partition regime}; set to $ER$ when the partition becomes available. \\
$LR$ & node$\times$partition & \textit{Leader regime}: $PR$ at which the current leader was first elected. \\
\midrule
\multicolumn{3}{@{}l}{\textit{Per-record metadata (stored with each version)}} \\
\addlinespace[2pt]
\texttt{Key.RR} & record version & Record’s regime tag: the $PR$ in effect when this version was (re-)replicated. \\
$VN$ & record version & Version number within a given $RR$ (used implicitly). \\
$LC$ & record version & Logical clock used for per-key ordering and dup-res; \emph{defined as the lexicographic pair} $(RR, VN)$. For brevity, we refer only to $LC$ elsewhere in the paper. \\
status & record version & $\{\textit{replicated}, \textit{unreplicated}\}$. \\
\midrule
\multicolumn{3}{@{}l}{\textit{Leader/cluster state}} \\
\addlinespace[2pt]
\texttt{NodesInCluster} & node & Local view of current cluster members (used for decisions). \\
\textit{full} & node$\times$partition & Node has the latest version of every record in $P$. \\
\textit{duplicate} & node$\times$partition & Node may hold the latest version of some record in $P$. \\
\midrule
\multicolumn{3}{@{}l}{\textit{Message fields, helpers, and rules}} \\
\addlinespace[2pt]
$LRM$ & message field & Leader’s $LR$ piggy-backed on \textsc{Replica-Write}. \\
$\textsc{Replicas}(\texttt{NodesInCluster},P)$ & helper & First $RF$ nodes of $P$’s succession list that are in \texttt{NodesInCluster}. \\
\texttt{check\_regime}($N,PR$) & helper & Success iff $N$’s $PR{=}PR$ and $N$ recognizes the caller as leader for $P$. \\
PR-Match for Migration & rule & Migrate into leader only when sender and leader share the same $PR$. \\
\bottomrule
\end{tabular}
\end{table*}

Table~\ref{tab:notation} contains a glossary of terms and their intuitive meanings we will use in the rest of the paper for ready reference.

We will now describe the details of LARK. LARK consists of the following algorithms which we will describe one after the other:
\begin{enumerate}
\item A Scalable Global Clustering Algorithm
\item Rebalancing of data amongst cluster replicas of a partition after reclustering
\item Reads and Writes
\end{enumerate}

\subsection{Global Reclustering Algorithm}
\label{subsec:globalclustering}

LARK’s \emph{Partition Availability Conditions (PAC)} reason over the \emph{database-wide} cluster. In particular, they depend on how many nodes of the roster are in the cluster, which is the key insight that provides LARK its availability advantage over Raft/VR as shown in Section~\ref{sec:expts}. Therefore, LARK must maintain an authoritative, agreed-upon \emph{global cluster membership}. 

When cluster membership changes (node joins, departures, or failures), reclustering performs three steps:
\begin{enumerate}
    \item \textbf{Consensus on ClusterMembers:} Nodes continuously exchange heartbeats over direct links. For a cluster of size \(n\), this involves approximately \(n(n-1)/2\) peer connections per period. When a connectivity change stabilizes, nodes run a single consensus step to finalize the new \texttt{ClusterMembers}.
    \item \textbf{Minting a new exchange number:} A unique, monotonically increasing \emph{exchange} number is allocated to each node in the cluster. 
    \item \textbf{Deterministic cluster replica/leader computation:} Given \texttt{ClusterMembers}, each node independently computes cluster replicas as the first $RF$ nodes in the succession list that are also in the cluster.  
\end{enumerate}

Once these steps complete, each node atomically updates its \texttt{ClusterMembers}, \texttt{exchange number}, and local \emph{succession lists} derived from the roster. All nodes now agree on the same cluster view, enabling PAC-based decisions to proceed safely.

\paragraph{Why global reclustering is not a scalability bottleneck.}
At first glance, a global step sounds costly; in practice, steady-state control traffic is dominated by heartbeats, and the one-shot consensus to mint a new exchange number is linear.

\emph{LARK.} Nodes maintain full-mesh heartbeats: \(O(n(n-1))\) tiny messages per period for a cluster of \(n\) nodes. When a connectivity change stabilizes, reclustering adds a single consensus round to finalize \texttt{ClusterMembers} and mint \(ER\), which is \(O(n)\).

\emph{Quorum-log protocols (Raft/VR).} With \(P\) partitions and replication factor \(RF\), per-partition leaders send heartbeats to replicas each period: \(O\!\left(P \cdot RF \cdot (RF-1)\right)\). Transient membership changes or leader failures can also trigger per-partition elections of similar order. Thus, for typical deployments where \(RF\) is a small constant (e.g., \(RF{=}3\)), the control-plane message rate scales with \(P\), not \(n\). It turns out these two seem to be the same around $n=157$\footnote{Comparing steady-state heartbeats,
\[
n(n-1) \;\approx\; P \cdot RF \cdot (RF-1)
\quad\Rightarrow\quad
n \;\approx\; \sqrt{P \cdot RF \cdot (RF-1)}.
\]
For \(RF{=}3\), this is \(n \!\approx\! \sqrt{6P}\). With \(P{=}4096\) (Aerospike’s default), \(\sqrt{6P}\!=\!\sqrt{24576}\!\approx\!156.8\). So LARK’s full-mesh heartbeat volume matches a Raft/VR deployment with \(RF{=}3\) at around \(n\!\approx\!157\) nodes.}
Empirically, LARK operates comfortably from tens to low-hundreds of nodes per cluster; this already covers multi-petabyte clusters with modern nodes offering $\sim$100\,TB of local storage each. As described in Section~\ref{sec:conclusion}, we plan to extend LARK’s applicability to thousands of nodes.

\subsection{Rebalancing of a Partition}
\label{subsec:rebalance}

After a cluster is formed via reclustering (Section~\ref{subsec:globalclustering}), each node independently performs a local \textit{rebalance} operation for every partition it may be responsible for as part of this new cluster. The purpose of rebalance is to determine:
\begin{itemize}
    \item whether a partition is available under the current PAC rules,
    \item which nodes become the new cluster replicas, and
    \item which node becomes the cluster leader.
\end{itemize}

\begin{sloppypar}
    Rebalance is triggered only after the clustering subsystem has atomically updated the node’s exchange number and the \texttt{ClusterMembers} variable. Any new reclustering during this process will cancel the in-progress rebalance and restart it.
\end{sloppypar}

A few quick definitions before we begin describing the rebalance process:

\begin{itemize}
    \item  [\bf PR:]

Each node maintains a \textit{partition regime} (PR) for each partition it stores, which is used as a logical timestamp indicating the current partition version. When a partition becomes available within a new cluster, all its cluster replicas update their PR to match the node’s exchange number.

\item [\bf LR:]
To track leadership history, the system maintains a \textit{leader regime} (LR) for each partition, which records the PR at which the current leader was first elected. This is used later to decide what delayed writes if any to accept. 
\end{itemize}
The rebalance process consists of the following steps:

\begin{enumerate}

    \item \textbf{Exchange Full Status:} Each node predicts whether it will be \textit{full} after rebalance. A node is considered full if:
    \begin{itemize}
        \item its current PR is one less than the new exchange number, and
        \item it is full in the current PR.
    \end{itemize}

    \item \textbf{Evaluate Partition Availability:} Each node independently evaluates PAC based on:
    \begin{itemize}
        \item the current cluster membership,
        \item the succession list of the partition, and
        \item the predicted full status of nodes.
    \end{itemize}
    If the partition is not available, rebalance terminates and the node marks itself as not full for the partition. The remaining steps are skipped.

    \item \textbf{Retain Previous Leader (if applicable):} If the current leader is in \texttt{ClusterMembers} and is a cluster replica, it remains the leader for the new regime. The leader shares its LR with the rest of the cluster. \ASImplementation{In our implementation we always eventually move leadership to the leftmost node. This has the advantage that load balance is restored after failures. We chose not to force one more migration in this paper to keep the discussion simpler - moving leadership to leftmost node along with the leader not changed optimization (which is not present in our implementation) still works.}

    \item \textbf{Atomically Update Local State:} Each node then updates the following variables atomically:
    \begin{itemize}
        \item Set the new partition regime PR = exchange number.
        \item Mark the full status of the node.
        \item Copy \texttt{ClusterMembers} into a new variable \texttt{NodesInCluster} used for local read/write decisions.
        \item If a leader has been chosen in step 4, update LR using the value provided by that leader.
        \item If a leader was not retained from step 4:
        \begin{itemize}
            \item If there exists a full node from the previous regime, the first full node (by succession list order) becomes the cluster leader. LR is set to the new PR. If the chosen leader is not among the top $RF$ nodes in the succession list, it serves as an \textit{acting leader} and will later transfer leadership to the first cluster replica in the succession list. 
            \item If no node was full, the first available node in the succession list becomes the leader. LR is again set to PR.
        \end{itemize}
    \end{itemize}

    \item \textbf{Leader Immigration (if needed):} If the new cluster leader is not full, it begins migration of the latest versions of records from any nodes (including the acting leader) that may have them (such nodes called duplicates are formally defined in Section~\ref{subsec:duplicates}). This step guarantees eventual freshness.

    \item \textbf{Replica Emigration (if needed):} Once the leader becomes full, it proactively migrates the latest versions to all other cluster replicas, ensuring they also become full.
\end{enumerate}

\subsubsection{Atomicity of Rebalance Steps}

Within the rebalance process, Steps 1 through 3 can proceed concurrently with reads and writes. Step 4, which updates shared variables such as the partition regime, full status, and cluster membership view, must be performed atomically with respect to reads and writes. Note that Step 4 involves minimal local logic only and should take of the order of hundreds of nanoseconds to a few microseconds to complete. 

Migration steps (Steps 5 and 6), if required, are performed asynchronously. To ensure consistency, we introduce the constraint \textbf{PR Match for Migration}: a node may migrate its records into the current cluster leader only if both nodes share the same partition regime (PR). This constraint is essential for correctness, as formalized in Section~\ref{app:proof}.

The full status of cluster leaders and cluster replicas is a shared variable accessed by both rebalance and the read/write path. It is updated atomically upon completion of migrations.

\subsubsection{Duplicates}
\label{subsec:duplicates}

In Step 5 of the rebalance process, we refer to nodes that might hold the latest version of a record in a partition. We call these nodes \textit{duplicates}.

A node $N$ becomes a duplicate for a partition when it becomes a  cluster replica.  $N$ can be removed as a duplicate when: it is part of a cluster in which the partition is available and it is not a cluster replica and the leader has migrated its latest record versions into the cluster replicas (after step 6 of the rebalance process). At this point, the responsibility for holding the latest versions is fully transferred to the new cluster replicas, all of which are now considered duplicates.  We will need the notion of duplicates in Section~\ref{subsec:asyncreadsandwrites}.

\subsection{High Level Overview of Reads and Writes}

Clients always send read and write requests to the current leader of a partition. If the contacted node is not the actual leader, it proxies the request to the correct leader (though this is elided in the algorithms for clarity). Writes are always propagated to all $RF$ cluster replicas. A write is acknowledged to the client only after all replicas have accepted it. 

Each version of a record at a node is associated with a \textit{replication status}, which can be either \textit{replicated} or \textit{unreplicated}. A version can be marked replicated once all $RF$ replicas in the current cluster have acknowledged it. Until then, the version remains unreplicated and may be subject to further propagation or overwrite depending on leader transitions.

If $RF>2$, the replicas are advised to mark their copies replicated once the client has been informed of the success.  Figure~\ref{fig:write-path} illustrates the entire write path from client to leader to replicas and back to client for RF=3.
Note that if $RF=2$, the replica marks its copy replicated right away. We illustrate how each copy marks itself replicated one after the other in Figure~\ref{fig:rf2replprogress}.

\begin{figure}[t]
  \centering
  \includegraphics[width=\linewidth]{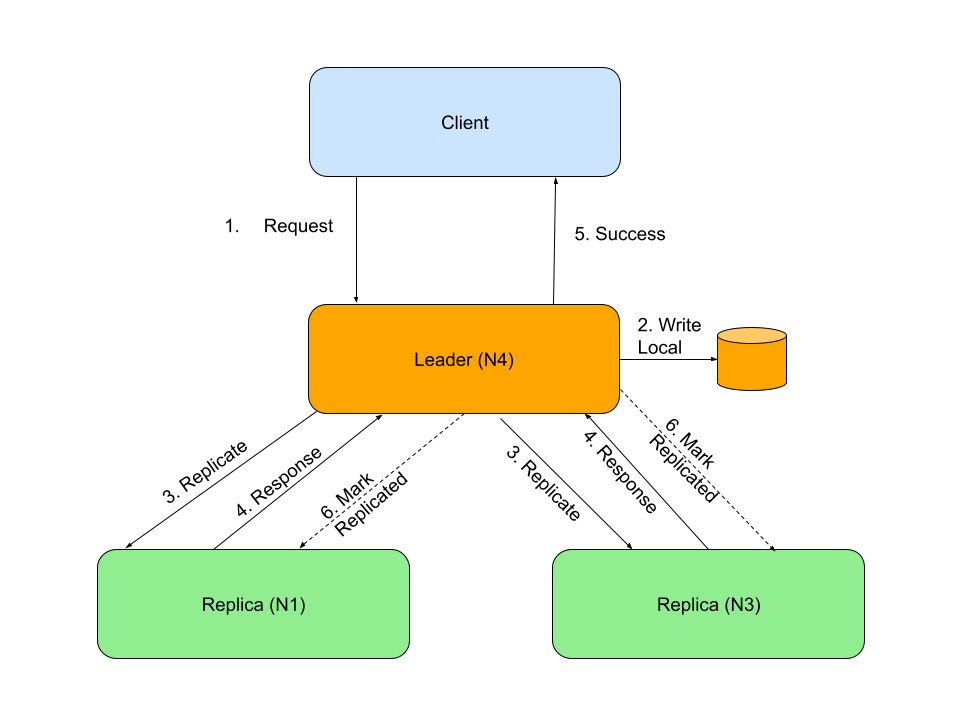}
  \caption{The Write Path across Nodes for a Client Request 
  }
  \label{fig:write-path}
\end{figure}

\begin{figure}[t]
  \centering
  \includegraphics[width=\linewidth]{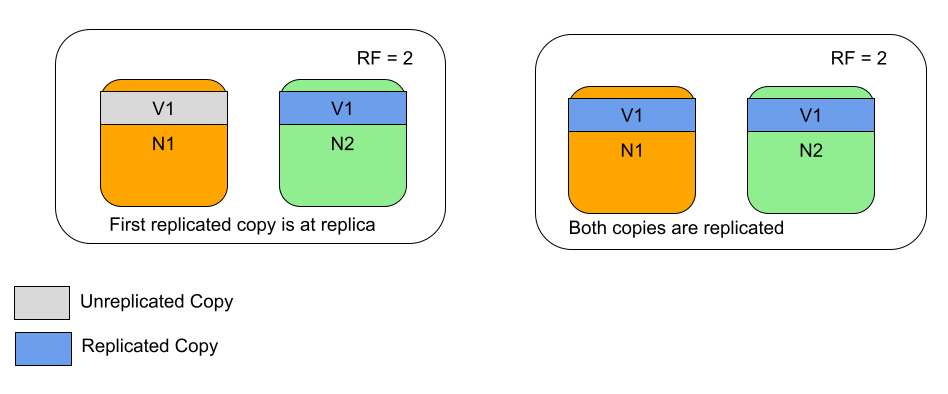}
  \caption{Progression of Replication Across Replicas with time
  }
  \label{fig:rf2replprogress}
\end{figure}

To ensure the leader holds the latest record version, it may first perform \textit{duplicate resolution} (dup-res) to get hold of the most recent version in the cluster. It does this by calling a dup-res function at each node that could hold the latest version of a record. If that version is unreplicated, the leader first re-replicates it to the cluster replicas before applying the update on the latest replicated version of the record.

Reads follow a similar pattern. It may invoke dup-res to ensure it serves a consistent value.

\subsection{Detailed Algorithms for Reads and Writes}
\label{subsec:asyncreadsandwrites}

We now expand the read and write processes into their constituent steps, taking into account that nodes in the system may not have a consistent or synchronized view of the current cluster state. Reads and writes are long-lived operations across multiple nodes in the distributed system and apart from accessing a few shared variables no part of their execution is atomic with other ongoing reclustering or rebalance operations.

Algorithm~\ref{alg:clientwrite} and Algorithm~\ref{alg:replicawrite} describe the write protocol, while Algorithm~\ref{alg:clientread} outlines the read protocol. The \textit{duplicate resolution} (\textit{dup res}) \emph{replica-side handler} is captured in Algorithm~\ref{alg:dupres}; the leader invokes it against candidate holders of the latest version. We now provide additional details about the above algorithms.

\subsubsection{Client-Write Algorithm}
\label{subsubsec:client-write}

Algorithm~\ref{alg:clientwrite} describes the steps taken by the current leader upon receiving a write request from a client. As noted earlier, clients track partition leaders and direct writes accordingly.

In Line 6, the node verifies whether it is still the current leader for the partition. If not, the write is rejected.\footnote{In practice, the request is  proxied to the current leader if known. We omit those implementation details for clarity.}

In Line 9, the leader invokes duplicate resolution (\textit{dup res}) only if it is not full or if the current version of the record does not belong to the current regime. If the leader is full or if it holds a record version from the current regime, \textit{dup res} is unnecessary, as the latest version is guaranteed to be locally available.

If the current version is found to be \textit{unreplicated} (Line 12), the leader triggers a re-replication to all current cluster replicas. This re-replication is treated logically as though the write were issued anew in the current regime—i.e., the re-replicated record is tagged with the current partition regime (PR) as its regime number.

Once the current version has been marked as \textit{replicated}, the leader applies the client’s new write to its local copy.\footnote{We assume for simplicity that the client sends the entire record. In practice, the client may update only a subset of fields.} The new version is then replicated to all cluster replicas.\footnote{To optimize bandwidth, the system may replicate a delta or log describing how to derive the new version from the previous one. If the replica has the prior version, it can immediately apply the delta and discard it. This mechanism is not to be confused with the logs of majority consensus protocols, which rely on ordered, persistent logs and require explicit log management.}

Finally, the leader acknowledges the write to the client only after all replicas have accepted the update.

\subsubsection{Dup Res Algorithm}
\label{subsubsec:dupres}

Duplicate resolution (\textit{dup res}) is executed at the leader node and is conceptually straightforward: the leader queries all nodes that may hold the latest version of a record and selects the version with the largest logical clock (LC), regardless of whether it is marked as \textit{replicated} or \textit{unreplicated}. 

A key aspect of dup res is that a replica responding to such a request must verify that the requesting node is a valid member of the current cluster. Beyond this, no assumptions are made about the relative regimes of the requester and responder. Algorithm~\ref{alg:dupres} captures the dup res process at the replica.

\subsubsection{Replica-Write Algorithm}
\label{subsubsec:replica-write}

Algorithm~\ref{alg:replicawrite} outlines the logic used by cluster replicas when processing write requests received from the leader. 

In high-throughput deployments, partitions may have hundreds or thousands of writes in transit at any given time. During a cluster transition, it is critical to avoid discarding or retrying all such operations. LARK tolerates minor discrepancies between nodes—particularly those differing by at most one regime—to maximize availability without sacrificing linearizability.

The conditions  in the algorithm are evaluated atomically relative to any concurrent changes introduced by reclustering or rebalance and their roles are described first and then we show the necessity for these conditions through some examples in Appendix~\ref{app:replicawrite}.

\begin{itemize}

    \item \textbf{Condition \textsf{LeaderNotTooOld}:} The sender's $PR$ when it received this write request from the client is set to $RR$ in Line 4 of the \textsc{Client-Write} algorithm. This must satisfy $RR \ge ER - 1$ at the replica—i.e., the sender cannot be more than one regime behind any future leader. As we will see in the proof, this ensures there is some continuity in terms of nodes seeing the writes and thus it is safe to accept such a potentially older write. 
    
    \item \textbf{Condition \textsf{SameLeaderRegime}:} This condition relaxes Condition \textsf{LeaderNotTooOld} if the leader has remained unchanged between the time the message was sent and received. In this case, we do not need to worry about the older writes from the same current leader - it is still the leader at the current $PR$ at the replica and that is still within one regime of a future leader by virtue of Condition \textsf{LeaderNotTooNew} being True.
    \ASImplementation{We do not have this condition/optimization in our implementation currently. The main motivation for this is to match what Raft does when the leader has not failed - other failures do not change the term and so updates from the same term are accepted. SameLeaderRegime is like the term in Raft}
    
\item \textbf{Condition \textsf{LeaderInCluster}:} The sender of the write request must be in current cluster (in the replica's view).

\item \textbf{Condition \textsf{LeaderNotTooNew}:} We already saw how this worked in tandem with \textsf{SameLeaderRegime}. In addition, this is also used to prevent two future leaders from treading on each other.

\item \textbf{Condition \textsf{NodeInReplicaSet}:} The receiving node must currently be a legitimate cluster replica, according to its local view of the cluster - note by virtue of the fact that it received the replica write from the leader at regime RR, it would be in the replica set for RR but it needs to be in the replica set at the time it accepts the replica write. 

\end{itemize}

\subsubsection{Client-Read Algorithm}
\label{subsubsec:client-read}

The \textsc{client read} algorithm follows the same initial steps as the \textsc{client-write} algorithm: the leader ensures that it holds the latest version of the record. Once that condition is satisfied, the leader must also verify that it is still considered the current leader all the other replicas before responding to the client.

This additional verification step is necessary to guard against cases where another cluster may have formed in the background and successfully completed a write before the current read request was initiated. Ensuring that the leader is still valid at the time of responding preserves real-time ordering guarantees, a requirement for any protocol that implements linearizability, including Paxos and Raft.

\begin{algorithm}[t]
\caption{Client-Write Algorithm}
\label{alg:clientwrite}
\begin{algorithmic}[1]
\Function{Client-Write}{Key, leader, Record}
    \State $P \gets \text{Partition}(Key)$ \Comment{Identify partition based on key}
    \State \textbf{Read Atomically}:
        \State \hspace{1em} $RR \gets$ PR (Partition Regime)
        \State \hspace{1em} $CurrLeader \gets$ Current leader of $P$
    \If{$leader \neq CurrLeader$}
        \State Reject client write
    \EndIf
    \If{$leader$ is not full \textbf{and} $Key.RR \neq RR$}
        \State Perform \textsc{Dup-Res}
    \EndIf
    \If{$Key$ is unreplicated}
        \State $ClusterReplicas \gets$ \textsc{Replicas}$(\texttt{NodesInCluster}, P)$
        \State Rereplicate $Key$ to $ClusterReplicas$
        \State Mark $Key$ as replicated
    \EndIf
    \State Write $Record$ to local copy
    \State $ClusterReplicas \gets$ \textsc{Replicas}$(\texttt{NodesInCluster}, P)$
    \ForAll{$N \in ClusterReplicas$}
        \State Send \textsc{Replica-Write}$(Key, leader, N, RR, Key.LC, LR)$
    \EndFor
    \If{all replicas accept}
        \State Mark $Key$ as replicated
        \State Acknowledge write success to client
        \State Send \textsc{Mark-Replicated} advice to $ClusterReplicas$
    \Else
        \State Remove local copy of record
        \State Mark $Key$ as unreplicated
        \State Reject client write
    \EndIf
\EndFunction
\end{algorithmic}
\end{algorithm}

\begin{algorithm}[t]
\caption{Dup-Res Replica Handler}
\label{alg:dupres}
\begin{algorithmic}[1]
\Function{Dup-Res}{Key, leader}
    \If{$leader \in \texttt{NodesInCluster}$}
        \State Send back record and logical clock (LC) for $Key$
    \Else
        \State Send back failure
    \EndIf
\EndFunction
\end{algorithmic}
\end{algorithm}

\begin{algorithm}[t]
\caption{Replica-Write Algorithm}
\label{alg:replicawrite}
\begin{algorithmic}[1]
\Function{Replica-Write}{Key, leader, Replica, RR, LC, LRM}
    \State $P \gets \text{Partition}(Key)$
    \State \textbf{Compute atomically:}
    \State \hspace{1em} \textsf{LeaderInCluster} $\gets leader \in \texttt{NodesInCluster}$
    \State \hspace{1em} \textsf{NodeInReplicaSet} $\gets Replica \in \textsc{Replicas}(\texttt{NodesInCluster}, P)$
    \State \hspace{1em} \textsf{LeaderNotTooOld} $\gets (RR + 1 \geq ER)$
    \State \hspace{1em} \textsf{SameLeaderRegime} $\gets (LRM == LR)$
    \State \hspace{1em} \textsf{LeaderNotTooNew} $\gets (PR + 1 \geq ER)$
    \If{(\textsf{LeaderNotTooOld} $\lor$ \textsf{SameLeaderRegime}) $\land$ \textsf{LeaderInCluster} $\land$ \textsf{LeaderNotTooNew} $\land$ \textsf{NodeInReplicaSet}}
        \State $CurrLC \gets$ LC of current version of $Key$ on $Replica$
        \If{$LC > CurrLC$}
            \State Accept write
        \Else
            \State Reject write
        \EndIf
    \Else
        \State Reject write
    \EndIf
\EndFunction
\end{algorithmic}
\end{algorithm}

\begin{algorithm}[t]
\caption{Client-Read Algorithm}
\label{alg:clientread}
\begin{algorithmic}[1]
\Function{Client-Read}{Key, leader}
    \State $P \gets \text{Partition}(Key)$
    \If{$leader$ is current leader of $P$}
        \If{$leader$ is not full \textbf{and} $Key.RR \neq PR$}
            \State Perform \textsc{Dup-Res}
        \EndIf
        \State $ClusterReplicas \gets \textsc{Replicas}(\texttt{NodesInCluster}, P)$
        \If{$Key$ is unreplicated}
            \State Rereplicate $Key$ to $ClusterReplicas$
        \EndIf
        \ForAll{$N \in ClusterReplicas$}
            \If{\texttt{check\_regime}$(N, PR)$ fails}
                \State Reject client read
            \EndIf
        \EndFor
        \State Return record to client
    \Else
        \State Reject client read
    \EndIf
\EndFunction
\end{algorithmic}
\end{algorithm}

\subsection{Changes to Roster}
\label{subsec:rosterchanges}

The algorithms described thus far assume that the succession list for each partition remains fixed, that is, the set of nodes used to assign leadership and replica roles does not change. However, in practice, nodes may be added to or removed from the system, leading to updates in the roster and consequently, new succession lists for all partitions. This process is referred to as \textit{reconfiguration} in the literature.

There is a subtle but important distinction between reconfiguration in majority consensus protocols and in our context. In majority-based protocols (e.g., Paxos, Raft), a reconfiguration is required whenever the identity of the $2f+1$ nodes responsible for a partition changes. In contrast, LARK allows any node in the database to serve as a replica for any partition without requiring a formal reconfiguration. Thus, for LARK, reconfiguration specifically refers to changes in the \textit{roster}, i.e., the set of provisioned nodes in the system.

We implement roster changes by assigning each roster a version number and managing transitions through a two-phase commit protocol across all nodes. Once the coordinator of the roster change confirms that all nodes have prepared to adopt the new roster and its version, it sends a commit message to finalize the change.

Any clusters formed after a node observes the commit message, whether the node is the coordinator or not, will automatically begin using the updated roster and associated version.

\subsection{Proof of Correctness and TLA+ Verification}

We provide a complete proof of correctness in Appendix~\ref{app:proof}.
We have also verified LARK with a TLA+ implementation which is available at https://github.com/sesh-aerospike/lark-tla-spec.

\section{Experiments}
\label{sec:expts}

We present two complementary evaluations using \emph{distinct} discrete-event simulators. 
Section~\ref{sec:macro} quantifies \emph{cluster-scale availability} under independent node failures, reporting unavailability and confidence intervals across large configurations. 
Section~\ref{sec:micro} then examines \emph{per-partition dynamics} during a single-node outage and recovery, focusing on throughput and latency differences between LARK and quorum-log protocols.

\subsection{Cluster-scale availability under independent failures}
\label{sec:macro}

We evaluate the availability of \textbf{LARK} under independent node failures against a majority-quorum baseline (Raft/VR-style per-partition consensus), using a discrete-event simulator. The goal is to validate the structural claims that PAC expands availability beyond replica-set majority by reasoning over the database-wide cluster, and this effect scales with replication factor $RF=f{+}1$.

\subsubsection{Methodology}

We simulate a cluster with $n\!=\!155$ nodes and $P\!=\!4096$ partitions. The choice of $n$ is consistent with the cost boundary in Section~\ref{subsec:globalclustering} for $P\!=\!4096$\footnote{At this $P$, global heartbeats $n(n{-}1)$ and the aggregate message cost of $4096$ per-partition elections are comparable; see Section~\ref{subsec:globalclustering}}. For each replication factor $RF\in\{2,3,4\}$ (i.e., $f\in\{1,2,3\}$ tolerated failures), we sweep independent per-node failure probability $p \in [5\!\times\!10^{-5}, 10^{-2}]$ and repeat each configuration across multiple random seeds.

Replica placement per partition is performed so that all nodes are uniformly loaded (and no partition has two replicas on the same node) in both LARK and the baseline; because failures are modeled as i.i.d.\ across nodes, using AZ- or rack-aware placement would not change these availability results. Time advances in discrete \emph{ticks}\footnote{A tick may correspond to, for example, one second of elapsed time.}. At each tick: (i) every \emph{up} node fails independently with probability $p$; if it fails, it enters a \emph{down} state for a fixed downtime of $t_{\text{down}}=10$ ticks; (ii) all \emph{down} nodes decrement their remaining downtime and recover when it reaches zero; (iii) availability is evaluated for all partitions under LARK and the baseline. We use a target horizon of $\texttt{sim\_ticks}$ per run together with \emph{early stopping} to ensure tight estimates: each run proceeds for $T$ ticks where $50{,}000\le T\le 3{,}000{,}000$, and stops as soon as the 95\% CI half-width for the unavailability estimate $\hat U$ (defined in Eq.~\ref{eq:estimator}) falls below $\max(\varepsilon_{\mathrm{abs}}, \varepsilon_{\mathrm{rel}}\hat U)$ with $\varepsilon_{\mathrm{abs}}=5{\times}10^{-6}$ and $\varepsilon_{\mathrm{rel}}=5\%$. We check this condition every $5{,}000$ ticks and require at least $200$ unavailable events before early stopping can trigger. Unless noted, results aggregate \emph{three} independent seeds per $(f,p)$.

A partition is counted \emph{available in LARK} if the PAC \emph{SimpleMajority} holds—i.e., a database majority is present \emph{and} at least one node with the latest committed copy is reachable; the other PAC regimes are disabled, so reported LARK availability is a lower bound. A partition is \emph{available in the baseline} if a majority of its fixed $2f{+}1$ replica set is reachable.

\subsubsection{Estimator and reporting}
We report the estimated fraction of unavailable partitions under each system 
($\hat U_{\mathrm{LARK}}$, $\hat U_{\mathrm{Maj}}$), their ratio 
$\hat U_{\mathrm{Maj}}/\hat U_{\mathrm{LARK}}$ (``improvement factor''), and variability across seeds.
Formally, let $U_t^{\mathrm{sys}}$ be the number of partitions unavailable at tick $t$ for system $\mathrm{sys}\in\{\mathrm{LARK},\mathrm{Maj}\}$.
With $P$ partitions and $T$ ticks actually run, the estimator is
\begin{equation}
\label{eq:estimator}
\hat U_{\mathrm{sys}} \;=\; \frac{1}{PT}\sum_{t=1}^{T} U_t^{\mathrm{sys}},
\qquad \mathrm{sys}\in\{\mathrm{LARK},\mathrm{Maj}\},
\end{equation}
i.e., the fraction of partition-time (partition-ticks) that is unavailable.
Per-run 95\% confidence intervals use a normal approximation with denominator $PT$; we then summarize across seeds by reporting the mean and standard deviation.
(Plots/tables show $\hat U_{\mathrm{sys}}$; these estimate the long-run quantities $U_{\mathrm{sys}}$.)

\subsubsection{Results}

Figure~\ref{fig:unavail-f1} plots unavailability vs.\ node-failure probability $p$ for $RF\!=\!2$ ($f\!=\!1$). Across the entire range, majority-quorum unavailability is about \textbf{$\approx3\times$} LARK’s; the improvement factor is flat between $2.8$ and $3.0$.
\begin{figure}[h]
  \centering
  \includegraphics[width=\linewidth]{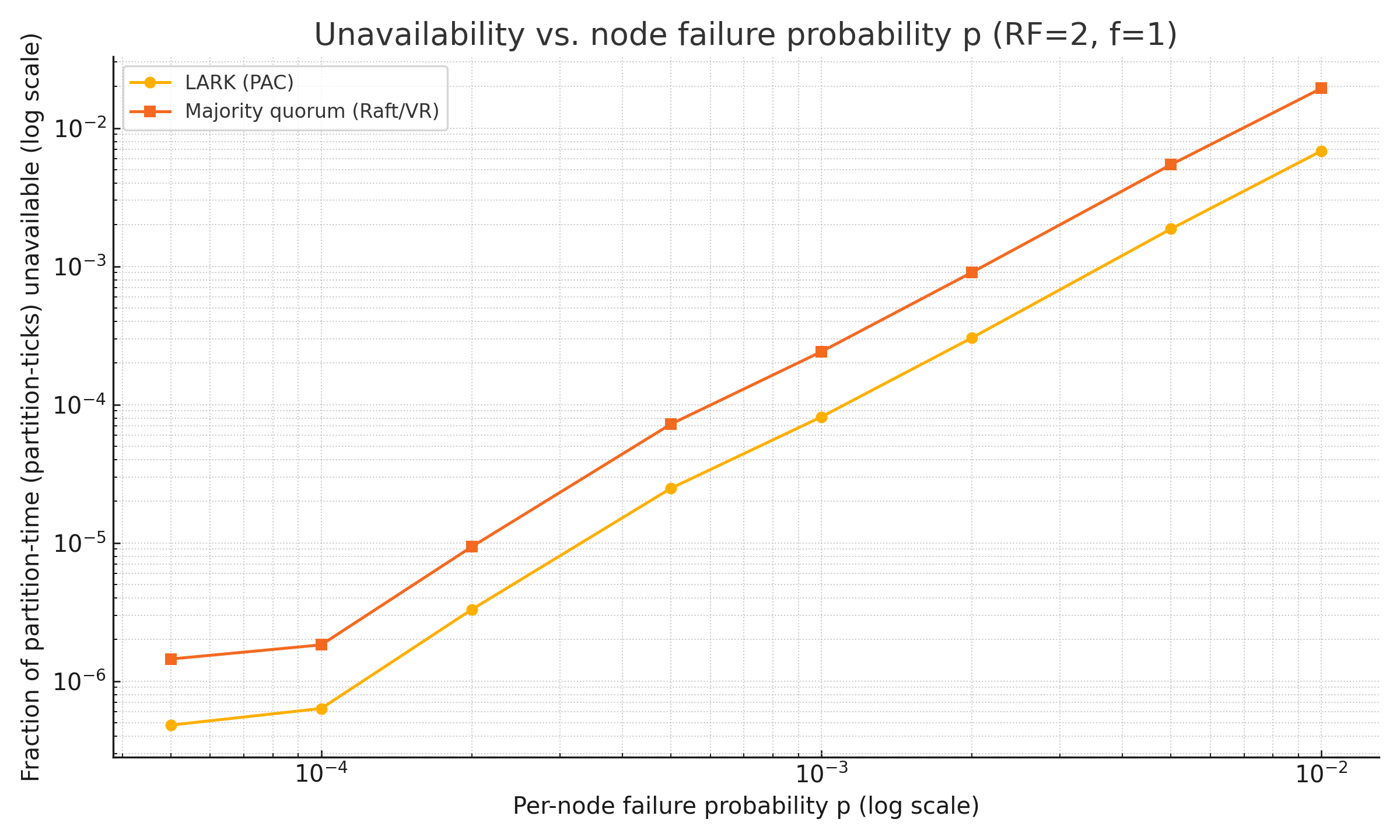}
  \caption{Unavailability vs.\ $p$ for $RF=2$ ($f=1$). Y-axis is the fraction of partition-time (partition-ticks) unavailable.}
  \label{fig:unavail-f1}
\end{figure}

Table~\ref{tab:availability-summary} summarizes the improvement factor (for a few data points) $U_{\mathrm{Maj}}/U_{\mathrm{LARK}}$ across $RF$ values. The simulator reproduces the analytical prediction of Appendix~\ref{app:analytical} that, under independent failures, majority-quorum unavailability scales as $\binom{2f+1}{f+1}p^{\,f+1}$ while LARK scales as $p^{\,f+1}$. The empirical geometric-mean factors observed are \textbf{$\sim$3$\times$} (RF=2), \textbf{$\sim$8–10$\times$} (RF=3), and \textbf{$\sim$36$\times$} (RF=4) as predicted by Equation~\ref{eqn:lark-raft-unavail-ratio} in Appendix~\ref{app:analytical}.

\begin{table}[h]
\centering
\begin{tabular}{c c c c c}
\toprule
$f$ & $RF$ & $p$ & $U_{\mathrm{LARK}}$ & $U_{\mathrm{Maj}}/U_{\mathrm{LARK}}$ \\
\midrule
1 & 2 & $10^{-4}$ & $6.34{\times}10^{-7}$ & $2.89{\times}$ \\
1 & 2 & $10^{-3}$ & $8.15{\times}10^{-5}$ & $2.96{\times}$ \\
1 & 2 & $10^{-2}$ & $6.84{\times}10^{-3}$ & $2.84{\times}$ \\
\midrule
2 & 3 & $2{\times}10^{-4}$ & $5.53{\times}10^{-9}$ & $10.12{\times}$ \\
2 & 3 & $10^{-3}$ & $6.35{\times}10^{-7}$ & $11.62{\times}$ \\
2 & 3 & $10^{-2}$ & $5.70{\times}10^{-4}$ & $8.83{\times}$ \\
\midrule
3 & 4 & $5{\times}10^{-4}$ & $3.26{\times}10^{-10}$ & $70.5{\times}$ \\
3 & 4 & $10^{-3}$ & $5.34{\times}10^{-9}$ & $39.7{\times}$ \\
3 & 4 & $10^{-2}$ & $4.71{\times}10^{-5}$ & $28.7{\times}$ \\
\bottomrule
\end{tabular}
\caption{Selected points from the sweep. Means over seeds; full results in CSV.}
\label{tab:availability-summary}
\end{table}
\begin{table*}[t]
\centering
\small
\setlength{\tabcolsep}{3pt}
\begin{tabular}{r|ccc|ccc|cccc|cccc|cc}
\toprule
\multirow{3}{*}{\#} & 
\multirow{3}{*}{\begin{tabular}{@{}c@{}}$rs$\\(KB)\end{tabular}} & 
\multirow{3}{*}{\begin{tabular}{@{}c@{}}$ps$\\(GB)\end{tabular}} & 
\multirow{3}{*}{\begin{tabular}{@{}c@{}}$bw$\\(MB/s)\end{tabular}} & 
\multicolumn{3}{c|}{Throughput (ops/s)} & 
\multicolumn{4}{c|}{Avg Latency (ms)} & 
\multicolumn{4}{c|}{P99 Latency (ms)} & 
\multicolumn{2}{c}{Recovery} \\
\cmidrule(lr){5-7} \cmidrule(lr){8-11} \cmidrule(lr){12-15} \cmidrule(lr){16-17}
& & & & 
\multirow{2}{*}{LARK} & 
\multirow{2}{*}{BASE} & 
\multirow{2}{*}{Ratio} & 
\multirow{2}{*}{LARK} & 
\multirow{2}{*}{BASE} & 
\multirow{2}{*}{Ratio} & 
\multirow{2}{*}{Delta} & 
\multirow{2}{*}{LARK} & 
\multirow{2}{*}{BASE} & 
\multirow{2}{*}{Ratio} & 
\multirow{2}{*}{Delta} & 
\begin{tabular}{@{}c@{}}LARK\\Backfill\\(s)\end{tabular} & 
\begin{tabular}{@{}c@{}}BASE\\Down\\(s)\end{tabular} \\
& & & & & & & & & & & & & & & & \\
\midrule
1 & 1 & 0.1 & 5 & 2500 & 2364 & 1.06 & 1.1 & 1.0 & 1.07 & +0.1 & 2 & 1 & 2.00 & +1 & 66 & 20 \\
2 & 1 & 0.1 & 48 & 25000 & 24839 & 1.01 & 1.0 & 1.0 & 1.00 & +0.0 & 1 & 1 & 1.00 & +0 & 8 & 2 \\
3 & 1 & 0.9 & 5 & 2500 & 1356 & 1.84 & 1.0 & 1.0 & 1.00 & +0.0 & 1 & 1 & 1.00 & +0 & 135 & 200 \\
4 & 1 & 0.9 & 48 & 25000 & 23640 & 1.06 & 1.0 & 1.0 & 1.00 & +0.0 & 1 & 1 & 1.00 & +0 & 66 & 20 \\
5 & 1 & 9.3 & 5 & 2500 & 837 & 2.99 & 1.0 & 1.0 & 1.00 & +0.0 & 1 & 1 & 1.00 & +0 & 149 & 300 \\
6 & 1 & 9.3 & 48 & 25000 & 13547 & 1.85 & 1.0 & 1.0 & 1.00 & +0.0 & 1 & 1 & 1.00 & +0 & 135 & 200 \\
7 & 10 & 0.1 & 5 & 250 & 236 & 1.06 & 2.1 & 2.0 & 1.07 & +0.1 & 3 & 2 & 1.50 & +1 & 65 & 20 \\
8 & 10 & 0.1 & 48 & 2500 & 2484 & 1.01 & 1.0 & 1.0 & 1.01 & +0.0 & 1 & 1 & 1.00 & +0 & 8 & 2 \\
9 & 10 & 0.9 & 5 & 250 & 136 & 1.84 & 2.2 & 2.0 & 1.10 & +0.2 & 3 & 2 & 1.50 & +1 & 135 & 200 \\
10 & 10 & 0.9 & 48 & 2500 & 2364 & 1.06 & 1.1 & 1.0 & 1.07 & +0.1 & 2 & 1 & 2.00 & +1 & 66 & 20 \\
11 & 10 & 9.3 & 5 & 250 & 84 & 2.98 & 2.2 & 2.0 & 1.11 & +0.2 & 3 & 2 & 1.50 & +1 & 149 & 300 \\
12 & 10 & 9.3 & 48 & 2500 & 1356 & 1.84 & 1.0 & 1.0 & 1.00 & +0.0 & 1 & 1 & 1.00 & +0 & 135 & 200 \\
\bottomrule
\end{tabular}
\caption{Config 2: 80\% read workload with $u=0.5$, $lf=0.5$. Throughput measured until LARK completes backfill; BASELINE extrapolated to same time. Delta shows LARK -- BASELINE (positive means LARK is worse). LARK backfill is the duration from node recovery ($t{=}302$s) to backfill completion. BASELINE downtime is from failure ($t{=}2$s) to migration completion.}
\label{tab:config2_u05_lf05}
\end{table*}
\subsubsection{Discussion and limitations}
These experiments isolate the \emph{structural} availability effects of PAC vs.\ replica-set majority under independent node failures. They do not rely on implementation-specific optimizations. The results corroborate the analysis: LARK’s unavailability scales like $p^{\,f+1}$ with a constant factor advantage of $\binom{2f+1}{f+1}$ over majority-quorum baselines at small $p$, and the advantage persists across the range we swept. 

\subsection{Per-partition throughput and latency during a single failure}
\label{sec:micro}

We complement the availability study with a \emph{per-partition} micro-simulator that drives a single failure to complete recovery timeline and records throughput and latency over time.

\subsubsection{Methodology}
We model a single partition through a failure to complete recovery timeline and record per-second throughput and latency. The replication factor is $RF{=}2$ (tolerates one failure), a common cost/performance point. Time advances in a discrete-event engine with a 1~ms tick; the simulator processes all events at millisecond granularity. We plot aggregated metrics in seconds for readability.

\paragraph{System parameters.}
Each partition has a bandwidth budget $bw\!\in\!\{5,50\}\,\mathrm{MB/s}$; these values arise by dividing a $\sim\!10$~Gb/s ($\sim\!1$~GB/s)\footnote{Throughout, we round KB/MB/GB to powers of 10 for readability.} NIC across $\sim$20--200 partitions per node.\footnote{Given Aerospike's fixed 4096 partitions, $20$ partitions per node corresponds to $\approx 4096/20\approx 200$ nodes, while $200$ partitions per node corresponds to $\approx 4096/200\approx 20$ nodes. This mapping is only used to motivate $bw$; the micro simulator does not model the full cluster.} We fix $\mathrm{RTT}=1\,\mathrm{ms}$; the service time per request is
$\max\!\bigl(1\,\mathrm{ms},\ \text{bytes}/bw\bigr)$,
i.e., the larger of the RTT and the transmission time at the allotted per-partition bandwidth. Requests are scheduled using processor sharing: all in-flight operations share bandwidth equally.

\paragraph{Systems.}
\emph{Baseline (quorum-log, equal storage):} provisions exactly $f{+}1$ data replicas with no spare. When a replica fails at $t{=}2$s, it \emph{immediately} hydrates a replacement on another node via a full-partition transfer and pauses new commits during this rebuild. In our settings, hydration typically completes before the failed node returns at $t{=}302$s; when the original comes back it is no longer a replica and no further data motion is triggered.

\emph{LARK:} does \emph{not} start migration on failure. Because the failed node is the roster replica, LARK continues to commit to the surviving roster replica and a second node (the spare), and waits for the failed node to return. Upon return at $t{=}302$s, LARK backfills only the keys written during the outage to restore the roster placement; this backfill runs in the background while serving continues at full bandwidth. In practice, migrations are delayed briefly to ride out transient flaps; we model this with a 300s delay.

\paragraph{Workload and knobs.}
\begin{itemize}
  \item Record sizes $rs\!\in\!\{1,10\}\,\mathrm{KB}$ (typical for real-time Aerospike workloads).
  \item Partition sizes $ps\!\in\!\{0.1,1,10\}\,\mathrm{GB}$ (with 4096 partitions, $\sim$0.4--41~TB total).
  \item Read/write mix: uniform inter-arrival times with an 80\%/20\% read:write ratio. Reads transfer the full record; writes transfer $\mathsf{lf} \times rs$ bytes, where $\mathsf{lf}$ is the log-bytes fraction.
  \item Offered load $u\!\in\!\{0.5,0.8\}$, chosen to exercise contention during backfill without saturating the links. The arrival rate is computed as $\lambda = u \times bw / \text{avg\_request\_size}$, where avg\_request\_size accounts for the read/write mix and $\mathsf{lf}$.
  \item Log-bytes fraction $\mathsf{lf}\!\in\!\{0.5,1.0\}$ models how much of a record must be transmitted to represent an update (client$\rightarrow$leader and leader$\rightarrow$replicas). $\mathsf{lf}{=}1.0$ represents full replication; $\mathsf{lf}{=}0.5$ represents partial replication (e.g., only logging deltas or metadata).
  \item Failure timeline: node 0 fails at $t{=}2$s and recovers at $t{=}302$s (typical VM/instance reboot duration of 300s).
  \item Simulation duration: 1000s to ensure both LARK and BASELINE complete their recovery/backfill operations.
\end{itemize}

\paragraph{Throughput calculation.}
Both systems run for 1000s to ensure completion of all recovery operations. We report throughput over a measurement window $W$ defined as LARK's backfill completion time (typically 310--501s depending on partition size and bandwidth). For both systems, throughput is computed as total requests completed during $[0, W]$ divided by $W$, directly measured from per-second simulation logs. This ensures both systems are compared over the same time horizon, capturing LARK's availability advantage during the failure period and BASELINE's downtime.

\subsubsection{Results: Low utilization ($u{=}0.5$, $\mathsf{lf}{=}0.5$)}
Table~\ref{tab:config2_u05_lf05} shows results for $u{=}0.5$ and $\mathsf{lf}{=}0.5$. At this utilization, both systems have ample headroom to handle transient load variations.

\paragraph{Throughput.}
LARK achieves 1.01--2.99$\times$ BASELINE's throughput. The advantage is most pronounced when BASELINE's downtime is long (rows 5, 11: 300s downtime $\rightarrow$ 2.99$\times$ and 2.86$\times$ ratios) and minimal when downtime is short (rows 2, 8: 2s downtime $\rightarrow$ 1.01$\times$ ratio). This is expected: LARK maintains availability during the entire failure period ($t{=}2$ to $t{=}302$), while BASELINE is down for 2--300s depending on partition size and bandwidth. LARK's backfill duration (8--149s) is consistently shorter than or comparable to BASELINE's downtime for medium and large partitions.

\paragraph{Latency.}
LARK's average and P99 latencies are nearly identical to BASELINE's (ratios of 1.00--1.11$\times$, deltas of 0.0--0.2ms). This is because the 50\% utilization provides sufficient headroom: even during backfill, LARK allocates 80\% of bandwidth to foreground traffic (4MB/s or 40MB/s), leaving enough capacity to avoid queueing. The small P99 increases (+1ms) are due to occasional transient queues when the probabilistic read/write mix temporarily increases write load.

\subsubsection{Results: High utilization ($u{=}0.8$, $\mathsf{lf}{=}1.0$)}
Table~\ref{tab:config1_u08_lf10} shows results for $u{=}0.8$ and $\mathsf{lf}{=}1.0$. At this higher utilization with full replication, the systems operate closer to capacity.

\begin{table*}[t]
\centering
\small
\setlength{\tabcolsep}{3pt}
\begin{tabular}{r|ccc|ccc|cccc|cccc|cc}
\toprule
\multirow{3}{*}{\#} & 
\multirow{3}{*}{\begin{tabular}{@{}c@{}}$rs$\\(KB)\end{tabular}} & 
\multirow{3}{*}{\begin{tabular}{@{}c@{}}$ps$\\(GB)\end{tabular}} & 
\multirow{3}{*}{\begin{tabular}{@{}c@{}}$bw$\\(MB/s)\end{tabular}} & 
\multicolumn{3}{c|}{Throughput (ops/s)} & 
\multicolumn{4}{c|}{Avg Latency (ms)} & 
\multicolumn{4}{c|}{P99 Latency (ms)} & 
\multicolumn{2}{c}{Recovery} \\
\cmidrule(lr){5-7} \cmidrule(lr){8-11} \cmidrule(lr){12-15} \cmidrule(lr){16-17}
& & & & 
\multirow{2}{*}{LARK} & 
\multirow{2}{*}{BASE} & 
\multirow{2}{*}{Ratio} & 
\multirow{2}{*}{LARK} & 
\multirow{2}{*}{BASE} & 
\multirow{2}{*}{Ratio} & 
\multirow{2}{*}{Delta} & 
\multirow{2}{*}{LARK} & 
\multirow{2}{*}{BASE} & 
\multirow{2}{*}{Ratio} & 
\multirow{2}{*}{Delta} & 
\begin{tabular}{@{}c@{}}LARK\\Backfill\\(s)\end{tabular} & 
\begin{tabular}{@{}c@{}}BASE\\Down\\(s)\end{tabular} \\
& & & & & & & & & & & & & & & & \\
\midrule
1 & 1 & 0.1 & 5 & 3326 & 3153 & 1.05 & 3.4 & 2.5 & 1.38 & +0.9 & 27 & 5 & 5.40 & +22 & 69 & 20 \\
2 & 1 & 0.1 & 48 & 33327 & 33118 & 1.01 & 3.2 & 2.4 & 1.35 & +0.8 & 5 & 4 & 1.25 & +1 & 8 & 2 \\
3 & 1 & 0.9 & 5 & 3316 & 1926 & 1.72 & 4.8 & 2.5 & 1.95 & +2.4 & 28 & 5 & 5.60 & +23 & 172 & 200 \\
4 & 1 & 0.9 & 48 & 33275 & 31535 & 1.06 & 4.0 & 2.3 & 1.74 & +1.7 & 28 & 4 & 7.00 & +24 & 69 & 20 \\
5 & 1 & 9.3 & 5 & 3313 & 1330 & 2.49 & 5.2 & 2.5 & 2.09 & +2.7 & 28 & 5 & 5.60 & +23 & 197 & 300 \\
6 & 1 & 9.3 & 48 & 33187 & 19248 & 1.72 & 5.4 & 2.3 & 2.30 & +3.0 & 29 & 4 & 7.25 & +25 & 171 & 200 \\
7 & 10 & 0.1 & 5 & 332 & 315 & 1.05 & 3.9 & 3.3 & 1.20 & +0.7 & 20 & 11 & 1.82 & +9 & 69 & 20 \\
8 & 10 & 0.1 & 48 & 3333 & 3312 & 1.01 & 2.6 & 2.5 & 1.04 & +0.1 & 6 & 5 & 1.20 & +1 & 8 & 2 \\
9 & 10 & 0.9 & 5 & 331 & 193 & 1.72 & 4.9 & 3.3 & 1.50 & +1.6 & 24 & 11 & 2.18 & +13 & 172 & 200 \\
10 & 10 & 0.9 & 48 & 3326 & 3153 & 1.05 & 3.4 & 2.5 & 1.38 & +0.9 & 27 & 5 & 5.40 & +22 & 69 & 20 \\
11 & 10 & 9.3 & 5 & 331 & 134 & 2.48 & 5.2 & 3.3 & 1.57 & +1.9 & 25 & 11 & 2.27 & +14 & 199 & 300 \\
12 & 10 & 9.3 & 48 & 3316 & 1926 & 1.72 & 4.8 & 2.5 & 1.95 & +2.4 & 28 & 5 & 5.60 & +23 & 172 & 200 \\
\bottomrule
\end{tabular}
\caption{Config 1: 80\% read workload with $u=0.8$, $lf=1.0$. Throughput measured until LARK completes backfill; BASELINE extrapolated to same time. Delta shows LARK -- BASELINE (positive means LARK is worse). LARK backfill is the duration from node recovery ($t{=}302$s) to backfill completion. BASELINE downtime is from failure ($t{=}2$s) to migration completion.}
\label{tab:config1_u08_lf10}
\end{table*}

\paragraph{Throughput.}
LARK achieves 1.01--2.49$\times$ BASELINE's throughput, with the same pattern as the low-utilization case: larger gains when BASELINE's downtime is long (rows 5, 11: 2.49$\times$ and 2.48$\times$) and minimal gains when downtime is short (rows 2, 8: 1.01$\times$). The throughput advantage is purely due to availability: LARK serves requests throughout the failure period while BASELINE is down.

\paragraph{Latency trade-off.}
Unlike the low-utilization case, LARK exhibits significantly higher latencies at high utilization. Average latency increases by 1.04--2.30$\times$ (deltas of +0.1 to +3.0ms), and P99 latency increases by 1.20--7.25$\times$ (deltas of +1 to +25ms). This degradation occurs because LARK operates at 100\% foreground utilization during backfill: with 80\% of bandwidth allocated to foreground (e.g., 4MB/s out of 5MB/s total) and an 80\% offered load, LARK has no headroom to absorb transient load variations. The probabilistic read/write mix creates bursts of writes that temporarily exceed capacity, causing queue buildup and increased latency.

The latency penalty is most severe for small records with high bandwidth (rows 4, 6: 7.00--7.25$\times$ P99 ratio), where the high request rate amplifies queueing effects. Larger records with lower bandwidth (rows 7, 9, 11) show more modest P99 increases (1.82--2.27$\times$) because the lower request rate reduces contention.

\paragraph{Recovery time comparison.}
LARK's backfill duration (8--199s) is consistently shorter than BASELINE's downtime for medium and large partitions (200--300s), but longer for small partitions (2--20s). This reflects the fundamental trade-off: LARK optimizes for availability during failures at the cost of longer recovery for small partitions, while BASELINE optimizes for fast recovery at the cost of downtime.

\subsubsection{Summary}
LARK provides 1.01--2.99$\times$ better throughput than BASELINE by maintaining availability during failures. The throughput advantage scales with BASELINE's downtime: up to 2.99$\times$ when BASELINE is down for 300s (large partitions, low bandwidth) and only 1.01$\times$ when downtime is 2s (small partitions, high bandwidth). At low utilization ($u{=}0.5$), LARK achieves this with negligible latency impact (1.00--1.11$\times$ avg, 1.00--2.00$\times$ P99). At high utilization ($u{=}0.8$), LARK trades latency for availability: average latency increases by 1.04--2.30$\times$ and P99 by 1.20--7.25$\times$, particularly for small-record, high-bandwidth workloads.

\section{Related Work}
\label{sec:related}

The classical lineage for linearizable replication comprises Paxos, Raft, Viewstamped Replication (VR), and Zab (ZooKeeper’s atomic broadcast)~\cite{lamport2001paxos,ongaro2014search,liskov12vr,hunt2010zookeeper}. These systems are \emph{quorum-log} and coordinate via majority quorums over \emph{fixed replica sets}, which can strand partitions even when the cluster at large is healthy. This contrasts with LARK’s combination of a log-free data path and \emph{Partition Availability Conditions} (PAC) that reason at cluster scope.

\paragraph{Quorum-log refinements.}
A number of works make quorum formation or reconfiguration more flexible while remaining quorum-log and replica-set–scoped. \emph{Flexible Paxos} relaxes quorum intersection requirements to reduce quorum sizes but still relies on a persistent log and per-group quorum reasoning~\cite{howard2016flexiblepaxos}. \emph{Vertical Paxos} and \emph{Matchmaker Paxos} decouple or virtualize reconfiguration to simplify membership change and placement of acceptors/learners~\cite{lamport2009verticalpaxos,whittaker2020matchmaker}. These directions are complementary to LARK’s control-plane choices but do not provide PAC-style availability envelopes over the database-wide cluster. Representative Raft optimizations (e.g., KV-Raft, BUC-Raft, RaftOptima) reduce latency or improve log management, but they remain quorum-log designs—retaining ordered logs, majority-quorum intersection, and per-partition leader-coordination rounds—so availability and \emph{immediate partition readiness} after leader changes remain limited.

\paragraph{Log-free, state-direct approaches.}
Closer to LARK’s state-direct path, \emph{CASPaxos} and \emph{linearizable CRDT-based SMR} remove ordered logs and update state directly~\cite{rystsov2018caspaxos,skrzypczak2019linearizablecrdt}. CASPaxos preserves Paxos-style quorums and, under contention, may require multiple round trips or retries for a hot key; availability still depends on a majority of a configured replica set. Linearizable CRDTs typically restrict operation sets or add coordination to ensure convergence and linearizability; again, availability remains tied to the replica set. LARK differs in two respects: (i) it is log-free while keeping a one–round-trip common-case write path, using per-key duplicate resolution only when needed; and (ii) it broadens availability via PAC by reasoning over the cluster as a whole rather than a fixed per-partition replica set.

\paragraph{Fast-path commit and shared logs.}
\emph{CURP} (Consistent Unordered Replication Protocol) decouples client-perceived commit from ordering, finishing many operations quickly with witnesses while pushing to a log in the background~\cite{park2019curp}. CURP remains log-backed and replica-set–majority–based for durability and availability. \emph{Delos} virtualizes consensus atop a shared-log substrate to separate control (reconfiguration) from data (log appends), reducing some catch-up costs while staying fundamentally log-centric and quorum-based~\cite{balakrishnan2020delos}. In contrast, LARK removes logs from the data path and widens availability with PAC.

\paragraph{Production practice versus prototypes.}
Operationally oriented Raft variants (e.g., KV-specialized batching/compaction or alternative reconfiguration procedures) report latency/throughput improvements but stay within the log-and-quorum template. To our knowledge, there are few peer-reviewed reports of \emph{log-free} \emph{and} \emph{cluster-scope} availability (PAC-like) in production. Conversely, widely deployed commercial systems often disclose only partial details, limiting rigorous apples-to-apples evaluation.

\paragraph{Summary.}
Prior work either (a) keeps logs and replica-set majorities (Paxos/Raft/VR/Zab, Flexible/Vertical/Matchmaker Paxos, Delos, CURP), or (b) removes logs but still reasons over replica-set quorums (CASPaxos, linearizable CRDT SMR). LARK’s contribution is orthogonal: \emph{log-free, state-direct replication combined with PAC’s cluster-wide availability reasoning}, with a per-key duplicate resolution step that preserves linearizability without ordered logs.

\section{Conclusion}
\label{sec:conclusion}

We presented \textbf{LARK}, a synchronous replication design for real-time databases that delivers linearizability while minimizing latency and infrastructure cost and, crucially, enlarging the conditions under which partitions remain available. LARK combines three elements:
(i) \emph{Partition Availability Conditions (PAC)}, which reason over the database-wide cluster rather than a fixed replica set;
(ii) a \emph{log-free} read/write path with per-key duplicate resolution and background migration, making leaders immediately ready across transitions instead of waiting for ordered-log catch-up; and
(iii) tolerance of \emph{bounded view skew} (at most one regime), which keeps writes flowing during leader changes and trims tail latencies.

We established safety via formal arguments and a TLA+ specification, and we quantified benefits with analysis and simulation. Under independent failures, LARK’s unavailability scales as $p^{f+1}$ with a constant-factor advantage (e.g., $\sim\!3\times$ at $RF{=}2$, $\sim\!8$–$10\times$ at $RF{=}3$) over majority-quorum baselines. Under equal storage budgets, LARK continues committing during data-node failures while quorum-log systems pause to hydrate a replacement voter. Per-partition micro-experiments show that LARK maintains throughput during single-node outages, matching baseline latencies at moderate load and trading some latency for uninterrupted availability at high load.

There are two areas of future work we have identified:
\begin{enumerate}[leftmargin=1.2em]
  \item \textbf{Roster reconfiguration.} Streamline the roster-change path (Section~\ref{subsec:rosterchanges}) to reduce activation latency while preserving PAC semantics and safety.
  \item \textbf{Scaling clusters.} Replace full-mesh heartbeats with localized membership for groups of partitions (“partition clusters”), retaining PAC’s cluster-wide reasoning while lowering global reclustering pressure.
\end{enumerate}

\begin{acks}
We used AI-assisted tools for writing and engineering support. Specifically, ChatGPT for wording/grammar edits and figure/table captions; and Cursor—using GPT-5 and Claude Sonnet 4.5 models—for simulator coding assistance (e.g., boilerplate, refactoring, and debugging suggestions). 
\end{acks}


\bibliographystyle{ACM-Reference-Format}
\bibliography{myrefs}


\begin{thebibliography}{24}


\ifx \showCODEN    \undefined \def \showCODEN     #1{\unskip}     \fi
\ifx \showDOI      \undefined \def \showDOI       #1{#1}\fi
\ifx \showISBNx    \undefined \def \showISBNx     #1{\unskip}     \fi
\ifx \showISBNxiii \undefined \def \showISBNxiii  #1{\unskip}     \fi
\ifx \showISSN     \undefined \def \showISSN      #1{\unskip}     \fi
\ifx \showLCCN     \undefined \def \showLCCN      #1{\unskip}     \fi
\ifx \shownote     \undefined \def \shownote      #1{#1}          \fi
\ifx \showarticletitle \undefined \def \showarticletitle #1{#1}   \fi
\ifx \showURL      \undefined \def \showURL       {\relax}        \fi
\providecommand\bibfield[2]{#2}
\providecommand\bibinfo[2]{#2}
\providecommand\natexlab[1]{#1}
\providecommand\showeprint[2][]{arXiv:#2}

\bibitem[\protect\citeauthoryear{{Aerospike, Inc.}}{{Aerospike, Inc.}}{2018}]%
        {aerospike-jepsen-40}
\bibfield{author}{\bibinfo{person}{{Aerospike, Inc.}}} \bibinfo{year}{2018}\natexlab{}.
\newblock \bibinfo{title}{Aerospike 4.0, Strong Consistency, and Jepsen}.
\newblock \bibinfo{howpublished}{\url{https://aerospike.com/blog/aerospike-4-0-strong-consistency-and-jepsen/}}.
\newblock


\bibitem[\protect\citeauthoryear{Balakrishnan, Paramasivam, Wang, Zhang, Tai, Berger, Lockhart, Nelson, Potharaju, Zhang, et~al\mbox{.}}{Balakrishnan et~al\mbox{.}}{2020}]%
        {balakrishnan2020delos}
\bibfield{author}{\bibinfo{person}{Mahesh Balakrishnan}, \bibinfo{person}{Kartik Paramasivam}, \bibinfo{person}{Xi Wang}, \bibinfo{person}{Irene Zhang}, \bibinfo{person}{Amy Tai}, \bibinfo{person}{Daniel Berger}, \bibinfo{person}{David Lockhart}, \bibinfo{person}{Jacob Nelson}, \bibinfo{person}{Rahul Potharaju}, \bibinfo{person}{Kaiyuan Zhang}, {et~al\mbox{.}}} \bibinfo{year}{2020}\natexlab{}.
\newblock \showarticletitle{Virtual Consensus in Delos}. In \bibinfo{booktitle}{\emph{14th USENIX Symposium on Operating Systems Design and Implementation (OSDI 20)}}.
\newblock
\urldef\tempurl%
\url{https://www.usenix.org/system/files/osdi20-balakrishnan.pdf}
\showURL{%
\tempurl}


\bibitem[\protect\citeauthoryear{Howard, Malkhi, and Spiegelman}{Howard et~al\mbox{.}}{2016}]%
        {howard2016flexiblepaxos}
\bibfield{author}{\bibinfo{person}{Philippe~C. Howard}, \bibinfo{person}{Dahlia Malkhi}, {and} \bibinfo{person}{Alexander Spiegelman}.} \bibinfo{year}{2016}\natexlab{}.
\newblock \showarticletitle{Flexible Paxos: Quorum intersection revisited}. In \bibinfo{booktitle}{\emph{Proceedings of OPODIS}}.
\newblock
\urldef\tempurl%
\url{https://www.cl.cam.ac.uk/techreports/UCAM-CL-TR-935.pdf}
\showURL{%
\tempurl}


\bibitem[\protect\citeauthoryear{Hunt, Konar, Junqueira, and Reed}{Hunt et~al\mbox{.}}{2010}]%
        {hunt2010zookeeper}
\bibfield{author}{\bibinfo{person}{Patrick Hunt}, \bibinfo{person}{Mahadev Konar}, \bibinfo{person}{Flavio~P. Junqueira}, {and} \bibinfo{person}{Benjamin Reed}.} \bibinfo{year}{2010}\natexlab{}.
\newblock \showarticletitle{ZooKeeper: Wait-free coordination for Internet-scale systems}. In \bibinfo{booktitle}{\emph{USENIX ATC}}.
\newblock
\urldef\tempurl%
\url{https://www.usenix.org/legacy/event/atc10/tech/full_papers/Hunt.pdf}
\showURL{%
\tempurl}


\bibitem[\protect\citeauthoryear{Kingsbury}{Kingsbury}{2018}]%
        {jepsen-aerospike-39903}
\bibfield{author}{\bibinfo{person}{Kyle Kingsbury}.} \bibinfo{year}{2018}\natexlab{}.
\newblock \bibinfo{title}{Aerospike 3.99.0.3}.
\newblock \bibinfo{howpublished}{\url{https://jepsen.io/analyses/aerospike-3-99-0-3}}.
\newblock
\newblock
\shownote{Jepsen analysis.}


\bibitem[\protect\citeauthoryear{Lamport}{Lamport}{1998}]%
        {lamport1998tpp}
\bibfield{author}{\bibinfo{person}{Leslie Lamport}.} \bibinfo{year}{1998}\natexlab{}.
\newblock \showarticletitle{The Part-Time Parliament}.
\newblock \bibinfo{journal}{\emph{ACM Transactions on Computer Systems}} \bibinfo{volume}{16}, \bibinfo{number}{2} (\bibinfo{year}{1998}), \bibinfo{pages}{133--169}.
\newblock
\urldef\tempurl%
\url{https://doi.org/10.1145/279227.279229}
\showDOI{\tempurl}


\bibitem[\protect\citeauthoryear{Lamport}{Lamport}{2001}]%
        {lamport2001paxos}
\bibfield{author}{\bibinfo{person}{Leslie Lamport}.} \bibinfo{year}{2001}\natexlab{}.
\newblock \showarticletitle{Paxos made simple}.
\newblock \bibinfo{journal}{\emph{ACM SIGACT News (Distributed Computing Column) 32, 4 (Whole Number 121, December 2001)}} (\bibinfo{year}{2001}), \bibinfo{pages}{51--58}.
\newblock


\bibitem[\protect\citeauthoryear{Lamport, Malkhi, and Zhou}{Lamport et~al\mbox{.}}{2009}]%
        {lamport2009verticalpaxos}
\bibfield{author}{\bibinfo{person}{Leslie Lamport}, \bibinfo{person}{Dahlia Malkhi}, {and} \bibinfo{person}{Lidong Zhou}.} \bibinfo{year}{2009}\natexlab{}.
\newblock \bibinfo{booktitle}{\emph{Vertical Paxos and Primary-Backup Replication}}.
\newblock \bibinfo{type}{{T}echnical {R}eport} MSR-TR-2009-110. \bibinfo{institution}{Microsoft Research}.
\newblock
\urldef\tempurl%
\url{https://www.pdos.csail.mit.edu/6.824/papers/vertical-paxos.pdf}
\showURL{%
\tempurl}


\bibitem[\protect\citeauthoryear{Lamport and Massa}{Lamport and Massa}{2004}]%
        {lamport2004cheap}
\bibfield{author}{\bibinfo{person}{Leslie Lamport} {and} \bibinfo{person}{Mike Massa}.} \bibinfo{year}{2004}\natexlab{}.
\newblock \showarticletitle{Cheap Paxos}. In \bibinfo{booktitle}{\emph{Proceedings of the 2004 International Conference on Dependable Systems and Networks (DSN 2004)}}. \bibinfo{publisher}{IEEE Computer Society}, \bibinfo{address}{Florence, Italy}, \bibinfo{pages}{307--314}.
\newblock
\urldef\tempurl%
\url{https://doi.org/10.1109/DSN.2004.1311900}
\showDOI{\tempurl}


\bibitem[\protect\citeauthoryear{Liskov and Cowling}{Liskov and Cowling}{2012}]%
        {liskov12vr}
\bibfield{author}{\bibinfo{person}{Barbara Liskov} {and} \bibinfo{person}{James Cowling}.} \bibinfo{year}{2012}\natexlab{}.
\newblock \bibinfo{booktitle}{\emph{Viewstamped Replication Revisited}}.
\newblock \bibinfo{type}{{T}echnical {R}eport} MIT-CSAIL-TR-2012-021. \bibinfo{institution}{MIT}.
\newblock


\bibitem[\protect\citeauthoryear{Mendel, Pramstaller, Rechberger, and Rijmen}{Mendel et~al\mbox{.}}{2006}]%
        {infosec:ripemd}
\bibfield{author}{\bibinfo{person}{Florian Mendel}, \bibinfo{person}{Norbert Pramstaller}, \bibinfo{person}{Christian Rechberger}, {and} \bibinfo{person}{Vincent Rijmen}.} \bibinfo{year}{2006}\natexlab{}.
\newblock \showarticletitle{On the collision resistance of RIPEMD-160}. In \bibinfo{booktitle}{\emph{Proceedings of the 9th international conference on Information Security}}.
\newblock


\bibitem[\protect\citeauthoryear{Ongaro and Ousterhout}{Ongaro and Ousterhout}{2014a}]%
        {ongaro2014search}
\bibfield{author}{\bibinfo{person}{Diego Ongaro} {and} \bibinfo{person}{John Ousterhout}.} \bibinfo{year}{2014}\natexlab{a}.
\newblock \showarticletitle{In search of an understandable consensus algorithm}. In \bibinfo{booktitle}{\emph{2014 USENIX annual technical conference (USENIX ATC 14)}}. \bibinfo{pages}{305--319}.
\newblock
\urldef\tempurl%
\url{https://www.usenix.org/system/files/conference/atc14/atc14-paper-ongaro.pdf}
\showURL{%
\tempurl}


\bibitem[\protect\citeauthoryear{Ongaro and Ousterhout}{Ongaro and Ousterhout}{2014b}]%
        {ongaro2014raft}
\bibfield{author}{\bibinfo{person}{Diego Ongaro} {and} \bibinfo{person}{John Ousterhout}.} \bibinfo{year}{2014}\natexlab{b}.
\newblock \bibinfo{booktitle}{\emph{In Search of an Understandable Consensus Algorithm (Extended Version)}}.
\newblock \bibinfo{type}{{T}echnical {R}eport}. \bibinfo{institution}{Stanford University}.
\newblock
\urldef\tempurl%
\url{https://raft.github.io/raft.pdf}
\showURL{%
\tempurl}


\bibitem[\protect\citeauthoryear{Park, Kejriwal, Chaudhuri, Agarwal, Ratnasamy, Shenker, and Ousterhout}{Park et~al\mbox{.}}{2019}]%
        {park2019curp}
\bibfield{author}{\bibinfo{person}{Sangmin Park}, \bibinfo{person}{Ankita Kejriwal}, \bibinfo{person}{Shubham Chaudhuri}, \bibinfo{person}{Rachit Agarwal}, \bibinfo{person}{Sylvia Ratnasamy}, \bibinfo{person}{Scott Shenker}, {and} \bibinfo{person}{John Ousterhout}.} \bibinfo{year}{2019}\natexlab{}.
\newblock \showarticletitle{Exploiting Commutativity for Practical Fast Replication}. In \bibinfo{booktitle}{\emph{16th USENIX Symposium on Networked Systems Design and Implementation (NSDI 19)}}.
\newblock
\urldef\tempurl%
\url{https://www.usenix.org/system/files/nsdi19-park.pdf}
\showURL{%
\tempurl}


\bibitem[\protect\citeauthoryear{Rystsov}{Rystsov}{2018}]%
        {rystsov2018caspaxos}
\bibfield{author}{\bibinfo{person}{Denis Rystsov}.} \bibinfo{year}{2018}\natexlab{}.
\newblock \showarticletitle{CASPaxos: Replicated state machines without logs}.
\newblock \bibinfo{journal}{\emph{arXiv preprint arXiv:1802.07000}} (\bibinfo{year}{2018}).
\newblock
\urldef\tempurl%
\url{https://arxiv.org/pdf/1802.07000}
\showURL{%
\tempurl}


\bibitem[\protect\citeauthoryear{Skrzypczak and Shapiro}{Skrzypczak and Shapiro}{2019}]%
        {skrzypczak2019linearizablecrdt}
\bibfield{author}{\bibinfo{person}{Piotr Skrzypczak} {and} \bibinfo{person}{Marc Shapiro}.} \bibinfo{year}{2019}\natexlab{}.
\newblock \showarticletitle{Linearizable replicated data types}.
\newblock \bibinfo{journal}{\emph{arXiv preprint arXiv:1904.12335}} (\bibinfo{year}{2019}).
\newblock
\urldef\tempurl%
\url{https://arxiv.org/pdf/1904.12335}
\showURL{%
\tempurl}


\bibitem[\protect\citeauthoryear{Srinivasan and Bulkowski}{Srinivasan and Bulkowski}{2012}]%
        {as:vldb2012}
\bibfield{author}{\bibinfo{person}{V. Srinivasan} {and} \bibinfo{person}{B. Bulkowski}.} \bibinfo{year}{2012}\natexlab{}.
\newblock \showarticletitle{Citrusleaf: A Real-Time NoSQL DB which Preserves ACID}. In \bibinfo{booktitle}{\emph{Proceedings of the VLDB Endownment}}.
\newblock


\bibitem[\protect\citeauthoryear{Srinivasan, Bulkowski, Wei-Ling~Chu, Gooding, Iyer, Shinde, and Lopatic}{Srinivasan et~al\mbox{.}}{2016}]%
        {as:vldb2016}
\bibfield{author}{\bibinfo{person}{V. Srinivasan}, \bibinfo{person}{Brian Bulkowski}, \bibinfo{person}{Sunil~Sayyaparaju Wei-Ling~Chu}, \bibinfo{person}{Andrew Gooding}, \bibinfo{person}{Rajkumar Iyer}, \bibinfo{person}{Ashish Shinde}, {and} \bibinfo{person}{Thomas Lopatic}.} \bibinfo{year}{2016}\natexlab{}.
\newblock \showarticletitle{Aerospike: Architecture of a Real-Time Operational DBMS}. In \bibinfo{booktitle}{\emph{Proceedings of the VLDB Endowment}}, Vol.~\bibinfo{volume}{9}.
\newblock


\bibitem[\protect\citeauthoryear{Srinivasan, Faulkes, Autin, and Roberts}{Srinivasan et~al\mbox{.}}{2024}]%
        {aerospike:book2024}
\bibfield{author}{\bibinfo{person}{V. Srinivasan}, \bibinfo{person}{Tim Faulkes}, \bibinfo{person}{Albert Autin}, {and} \bibinfo{person}{Paige Roberts}.} \bibinfo{year}{2024}\natexlab{}.
\newblock \bibinfo{booktitle}{\emph{Aerospike: Up and Running}}.
\newblock \bibinfo{publisher}{O'Reilly Media}.
\newblock


\bibitem[\protect\citeauthoryear{Srinivasan, Gooding, Sayyaparaju, Lopatic, Porter, Shinde, and Narendran}{Srinivasan et~al\mbox{.}}{2023}]%
        {as:vldb2023}
\bibfield{author}{\bibinfo{person}{V. Srinivasan}, \bibinfo{person}{Andrew Gooding}, \bibinfo{person}{Sunil Sayyaparaju}, \bibinfo{person}{Thomas Lopatic}, \bibinfo{person}{Kevin Porter}, \bibinfo{person}{Ashish Shinde}, {and} \bibinfo{person}{B. Narendran}.} \bibinfo{year}{2023}\natexlab{}.
\newblock \showarticletitle{Techniques and Efficiencies from Building a Real-Time DBMS}.
\newblock \bibinfo{journal}{\emph{Proc. VLDB Endow.}} \bibinfo{volume}{16}, \bibinfo{number}{12} (\bibinfo{date}{Aug.} \bibinfo{year}{2023}), \bibinfo{pages}{3676–3688}.
\newblock
\showISSN{2150-8097}
\urldef\tempurl%
\url{https://doi.org/10.14778/3611540.3611556}
\showDOI{\tempurl}


\bibitem[\protect\citeauthoryear{Srinivasan, Gooding, Sayyaparaju, Lopatic, Porter, Shinde, Poluri, Narendran, Pathan, and Seshadri}{Srinivasan et~al\mbox{.}}{2025}]%
        {as:sigmod2025}
\bibfield{author}{\bibinfo{person}{V. Srinivasan}, \bibinfo{person}{Andrew Gooding}, \bibinfo{person}{Sunil Sayyaparaju}, \bibinfo{person}{Thomas Lopatic}, \bibinfo{person}{Kevin Porter}, \bibinfo{person}{Ashish~Krishnadeo Shinde}, \bibinfo{person}{Sri~Varun Poluri}, \bibinfo{person}{B. Narendran}, \bibinfo{person}{Daudkhan Pathan}, {and} \bibinfo{person}{Srinivasan Seshadri}.} \bibinfo{year}{2025}\natexlab{}.
\newblock \showarticletitle{Asynchronous Replication Strategies for a Real-Time DBMS}. In \bibinfo{booktitle}{\emph{Companion of the 2025 International Conference on Management of Data}} (Berlin, Germany) \emph{(\bibinfo{series}{SIGMOD/PODS '25})}. \bibinfo{publisher}{Association for Computing Machinery}, \bibinfo{address}{New York, NY, USA}, \bibinfo{pages}{635–647}.
\newblock
\showISBNx{9798400715648}
\urldef\tempurl%
\url{https://doi.org/10.1145/3722212.3724429}
\showDOI{\tempurl}


\bibitem[\protect\citeauthoryear{Thaler and Ravishankar}{Thaler and Ravishankar}{1996}]%
        {tr:rendezvous}
\bibfield{author}{\bibinfo{person}{David~G. Thaler} {and} \bibinfo{person}{Chinya~V. Ravishankar}.} \bibinfo{year}{1996}\natexlab{}.
\newblock \bibinfo{booktitle}{\emph{A Name Based Mapping Scheme for Rendezvous}}.
\newblock \bibinfo{type}{{T}echnical {R}eport}. \bibinfo{institution}{University of Michigan, Ann Arbor, Michigan}.
\newblock


\bibitem[\protect\citeauthoryear{van Renesse and Altinbuken}{van Renesse and Altinbuken}{2015}]%
        {vanrenesse2015moderately}
\bibfield{author}{\bibinfo{person}{Robbert van Renesse} {and} \bibinfo{person}{Deniz Altinbuken}.} \bibinfo{year}{2015}\natexlab{}.
\newblock \showarticletitle{Paxos Made Moderately Complex}.
\newblock \bibinfo{journal}{\emph{Comput. Surveys}} \bibinfo{volume}{47}, \bibinfo{number}{3} (\bibinfo{year}{2015}), \bibinfo{pages}{1--36}.
\newblock
\urldef\tempurl%
\url{https://doi.org/10.1145/2673577}
\showDOI{\tempurl}


\bibitem[\protect\citeauthoryear{Whittaker, Giridharan, Szekeres, Hellerstein, Howard, Nawab, and Stoica}{Whittaker et~al\mbox{.}}{2020}]%
        {whittaker2020matchmaker}
\bibfield{author}{\bibinfo{person}{Michael Whittaker}, \bibinfo{person}{Neil Giridharan}, \bibinfo{person}{Adriana Szekeres}, \bibinfo{person}{Joseph~M Hellerstein}, \bibinfo{person}{Heidi Howard}, \bibinfo{person}{Faisal Nawab}, {and} \bibinfo{person}{Ion Stoica}.} \bibinfo{year}{2020}\natexlab{}.
\newblock \showarticletitle{Matchmaker paxos: A reconfigurable consensus protocol [technical report]}.
\newblock \bibinfo{journal}{\emph{arXiv preprint arXiv:2007.09468}} (\bibinfo{year}{2020}).
\newblock
\urldef\tempurl%
\url{https://arxiv.org/pdf/2007.09468}
\showURL{%
\tempurl}


\end{thebibliography}

\appendix
\section{Replica Write Algorithm}
\label{app:replicawrite}

We now present a few illustrative examples that highlight the necessity of enforcing Conditions \textsf{LeaderInCluster}, \textsf{LeaderNotTooOld}, \textsf{LeaderNotTooNew} and \textsf{NodeInReplicaSet} in the \textsc{replica-write} algorithm. In each example, we name nodes as $N1$, $N2$, $N3$, etc., and indicate whether a node is \textit{full} in parentheses. We focus on a single partition and assume that the succession list for the entire roster follows lexicographic order. Time flows top to bottom. 

\par\noindent\rule{0.5\textwidth}{0.2pt}
\textbf{Example 1: Necessity of Condition \textsf{LeaderInCluster}} \\
\begin{verbatim}
RF = 2, Nodes: N1, N2, N3

Cluster = {N1 (full), N3 } 
// PR=ER=1 at N1 and N3
N1 receives a client write for version V
N1 writes to local copy with RR=1
Replica write for V to N3 is delayed

Cluster = {N2 , N3} // N1 is not in cluster
// PR=ER=2 at N2 and N3
N2 becomes leader and receives a write for V'
N2 performs dup res with N3 
Delayed write for V arrives at N3
\end{verbatim}

Conditions  \textsf{LeaderNotTooOld, LeaderNotTooNew} and \textsf{NodeInReplicaSet}  are satisfied at $N3$ when the delayed replica write for version $V$ arrives (last line in the example above). If Condition \textsf{LeaderInCluster} were not enforced, this write would be accepted. However, $N2$, as the leader, would be unaware of version $V$ and, having just completed a dup res, could proceed to process a client write under the incorrect assumption that it held the latest version.

\par\noindent\rule{0.5\textwidth}{0.2pt}
\textbf{Example 2: Necessity of Condition \textsf{LeaderNotTooOld}.} \\
\begin{verbatim}
RF = 3, Nodes: N1, N2, N3, N4, N5

Cluster = {N1 (full), N3, N4, N5} // N2 down
// PR = 1 for N1, N3 and N4
N1 receives a client write for version V
N1 writes to local copy with RR=1
Replica write for V to N4 is acked
Replica write for V to N3 is delayed

Cluster = {N1 (full), N2, N3} // N4, N5 down
// PR = 2 for N1, N2, N3

Cluster = {N2, N3, N5} // N1, N4 down
// PR = 3 at N2 and N5
// PR = 2 and ER = 3 at N3 (not yet rebalanced)
N2 becomes leader and receives a write for V'
Dup res succeeds at N3 (N2 was in N3's cluster in PR = 2)
Dup res succeeds at N5 (N2 in N5's cluster in PR = 3)
Replica write for V arrives at N3 and is accepted
\end{verbatim}

Conditions  \textsf{LeaderInCluster}, \textsf{LeaderNotTooNew}  and \textsf{NodeInReplicaSet} are satisfied at $N3$ when the delayed replica write for version $V$ arrives (last line in the example above). If Condition \textsf{LeaderNotTooOld} were not enforced, this write would be accepted. However, $N2$, as the leader, would be unaware of version $V$ and, having just completed a dup res, could proceed to process a client write under the incorrect assumption that it held the latest version.

\par\noindent\rule{0.5\textwidth}{0.2pt}

\textbf{Example 3: Necessity of Condition \textsf{LeaderNotTooNew}.} \\
\begin{verbatim}
RF = 3, Nodes: N1, N2, N3, N4, N5

Cluster = {N1 (full), N2, N3, N4, N5}
//PR = 1 at N1, N2 and N3

Cluster = {N2, N3, N4} // N1, N5 down
// PR = 2 at N2, N4
// PR = 1; ER = 2 at N3 (not yet rebalanced)
N2 receives a client write for version V
N2 issues dup res to N3 and N4 - succeeds
N2 issues write for V - N4 acks
Write of V to N3 is delayed


Cluster = {N1, N3, N5} // N2, N4 not in cluster
// PR = 3 at N1, N5
// PR = 1 and ER = 3 at N3 (still not rebalanced)
N1 becomes leader and receives client write for V'
Dup res succeeds at N3 (N1 in cluster when PR = 1)
Dup res succeeds at N5 (N1 in cluster when PR = 3)
Replica write for V arrives at N3 and is accepted
\end{verbatim}

Conditions  \textsf{LeaderInCluster}, \textsf{LeaderNotTooOld}  and \textsf{NodeInReplicaSet} are satisfied at $N3$ when the delayed replica write for version $V$ arrives (last line in the example above). If Condition \textsf{LeaderNotTooNew} were not enforced, this write would be accepted. However, $N1$, as the leader, would be unaware of version $V$ and, having just completed a dup res, could proceed to process a client write under the incorrect assumption that it held the latest version.

\par\noindent\rule{0.5\textwidth}{0.2pt}
\textbf{Example 4: Necessity of Condition \textsf{NodeInReplicaSet}} \\
\begin{verbatim}
RF = 2, Nodes: N1, N2, N3, N4

Cluster = {N1, N2, N3, N4} 
// PR=ER=1 at N1, N2

Cluster = {N1, N4} 
// PR=ER=2 at N1
// PR=1 at N4, ER = 2 at N4
N1 receives a client write for V
N4 accepts replica write for V (Problem!)

Cluster = {N2, N3, N4}
//PR=ER=3 at N2, N3, N4
N4 never rebalanced to PR=2 
So N4 does not think it is a duplicate
N2 will not dup res with N4
N2 can write V' without seeing V

\end{verbatim}

At the time N4 acceptes replica write for V, Conditions \textsf{LeaderNotTooOld}, \textsf{LeaderInCluster} and \textsf{LeaderNotTooNew} are all satisifed at N4 when its PR=1. However, it was not a replica then and as a result not a duplcate and therefore N2 does not dup res wit N4 when PR=3 causing N2 to miss V.

\par\noindent\rule{0.5\textwidth}{0.2pt}

\section{Formal Proof of Correctness}
\label{app:proof}

We first prove the Lemmas of Section~\ref{sec:partitionavailability}.

\begin{lemma}
\label{lemma:onerosterreplica}
Any cluster that satisfies one of the PAC rules for a given partition must include at least one roster replica of that partition.
\end{lemma}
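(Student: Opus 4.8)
The plan is to prove the lemma by a direct case analysis over the four PAC rules, showing that each one forces at least one of the partition's $RF$ roster replicas to lie in the cluster. First I would recall the placement definitions from Section~\ref{sec:aerospike-architecture}: the roster replicas of $P$ are, by construction, the first $RF$ nodes of $P$'s succession list, and the roster leader is the first of these. The key structural observation to state up front is therefore that the roster leader is itself a roster replica, and that $P$ has \emph{exactly} $RF$ roster replicas.

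With that observation, three of the four cases are essentially immediate. \textbf{AllRosterReplicas} places all $RF$ roster replicas in the cluster, so certainly at least one is present. \textbf{SimpleMajority} explicitly includes ``at least one roster replica of $P$'' in its hypothesis, so the claim holds by definition. \textbf{HalfRoster} requires the roster leader of $P$ to be present; since the leader is a roster replica, this case is settled as well. For each of these I would simply unfold the corresponding clause of the PAC definition and invoke the structural observation.

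The only case requiring a short counting argument is \textbf{SuperMajority}, whose hypothesis is that fewer than $RF$ roster nodes are missing. The central step is to note that the roster replicas form a distinguished size-$RF$ subset of the roster nodes, so the number of \emph{missing roster replicas} is at most the number of missing roster nodes, which is at most $RF-1$ by the strict inequality. Hence at least $RF-(RF-1)=1$ roster replica remains in the cluster. I would present this as a one-line pigeonhole/counting step.

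The main obstacle, such as it is, is confined to the SuperMajority case and is purely bookkeeping: I must be careful to count ``roster nodes missing'' over the whole roster while the roster replicas are only the leading size-$RF$ block, and to treat ``fewer than $RF$'' as the strict bound $\le RF-1$. Once those are pinned down, the arithmetic $RF-(RF-1)\ge 1$ closes the case, and the remaining three cases follow directly from the definitions with no real computation.
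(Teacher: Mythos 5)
Your proposal is correct and follows essentially the same approach as the paper's proof: three cases (AllRosterReplicas, SimpleMajority, HalfRoster) hold by definition—with your explicit note that the roster leader is itself a roster replica making the HalfRoster case precise—and SuperMajority follows from the counting argument that fewer than $RF$ missing roster nodes leaves at least one of the $RF$ roster replicas present. Your version merely spells out the bookkeeping that the paper's terser proof leaves implicit.
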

\begin{proof}
This is directly enforced by the PAC rules:
\begin{itemize}
    \item \textit{AllRosterReplicas}, \textit{SimpleMajority} and \textit{HalfRoster} require roster replica inclusion by definition.
    \item \textit{SuperMajority} implies fewer than $RF$ nodes are missing, so at least one roster replica is present.
\end{itemize}
\end{proof}

\begin{lemma}
\label{lemma:onecommonnode}
Let $C_1$ and $C_2$ be two distinct clusters that both satisfy PAC for a partition. Then $C_1$ and $C_2$ must share at least one node.
\end{lemma}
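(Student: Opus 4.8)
The plan is to prove this as a quorum-intersection statement by a case analysis on which of the four PAC rules each of $C_1$ and $C_2$ satisfies. First I would fix notation: let $m$ be the number of roster nodes for the partition, let $\mathcal{R}$ be the set of its $RF$ roster replicas (the first $RF$ nodes of the succession list), and for each cluster set $r_i = |C_i \cap \{\text{roster nodes}\}|$. I would then record the roster-node count each rule forces: \textsf{SuperMajority} and \textsf{SimpleMajority} both require a strict majority of roster nodes, so $r_i > m/2$; \textsf{HalfRoster} requires $r_i = m/2$ together with the roster leader being present; and \textsf{AllRosterReplicas} requires $\mathcal{R} \subseteq C_i$. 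The workhorse is the inclusion--exclusion bound $|C_1 \cap C_2| \ge r_1 + r_2 - m$ taken over the roster universe, so whenever $r_1 + r_2 > m$ the two clusters must share a roster node.

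The bulk of the pairings fall to this counting bound. If both clusters satisfy a strict-majority rule (\textsf{SuperMajority} or \textsf{SimpleMajority}, in any combination), then $r_1 + r_2 > m/2 + m/2 = m$, so they intersect. If one satisfies a strict-majority rule and the other satisfies \textsf{HalfRoster}, then $r_1 + r_2 > m/2 + m/2 = m$ again, with the strict inequality supplied by the majority side, so they intersect. This disposes of every pairing among the majority-type and half-roster rules except the one in which both clusters satisfy \textsf{HalfRoster}.

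Two situations escape pure counting, and for these I would appeal to structure. When both clusters satisfy \textsf{HalfRoster}, the bound only gives $r_1 + r_2 = m$, which is insufficient; but each \textsf{HalfRoster} cluster must contain the (unique) roster leader, so that leader is shared. When at least one cluster---say $C_1$---satisfies \textsf{AllRosterReplicas}, it contains all of $\mathcal{R}$; by Lemma~\ref{lemma:onerosterreplica} the other cluster $C_2$ contains at least one roster replica, and that replica lies in $\mathcal{R} \subseteq C_1$, hence is shared. This single observation covers every pairing in which \textsf{AllRosterReplicas} appears, so together with the counting cases all ten unordered pairs of rules are handled.

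The main obstacle is not any individual case but making the counting argument airtight about the universe over which the thresholds are measured: both the ``strict majority'' and ``exactly half'' conditions must be read over the fixed set of roster nodes, not over all cluster members (which may include non-roster nodes), so I would be careful to take every intersection within the roster before applying the bound. A secondary subtlety is the \textsf{HalfRoster}--\textsf{HalfRoster} pairing, the unique point where $r_1 + r_2 > m$ fails by exactly one and the shared-leader requirement becomes indispensable; I would also note that \textsf{HalfRoster} is vacuous when $m$ is odd, so this boundary case only arises for even $m$.
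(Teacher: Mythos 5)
Your proof is correct and takes essentially the same route as the paper's: a case analysis over the PAC rules in which majority-type pairings intersect by counting over the roster, every pairing involving \textsf{AllRosterReplicas} is settled by Lemma~\ref{lemma:onerosterreplica}, and the \textsf{HalfRoster}--\textsf{HalfRoster} pairing is settled by the shared roster leader. The only difference is presentational: you make the intersection bound $|C_1 \cap C_2| \ge r_1 + r_2 - m$ over the roster universe explicit, where the paper simply asserts that two majority-based clusters must intersect.
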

\begin{proof}
We analyze this based on the condition satisfied by $C_1$:
\begin{itemize}
    \item If $C_1$ satisfies SuperMajority, then it must intersect with any other majority-based cluster ($C_2$ satisfying SuperMajority, SimpleMajority, or HalfRoster). If $C_2$ satisfies AllRosterReplicas, then by Lemma~\ref{lemma:onerosterreplica}, they share a roster replica.
    \item If $C_1$ satisfies AllRosterReplicas, then any $C_2$ satisfying PAC will contain a common roster replica by Lemma~\ref{lemma:onerosterreplica}.
     \item If $C_1$ satisfies SimpleMajority: Similar argument as SuperMajority case.
     \item If $C_1$ satisfies HalfRoster: If $C_2$ is SuperMajority or SimpleMajority then they will share a node in common. If $C_2$ is AllRosterReplicas by Lemma~\ref{lemma:onerosterreplica}, they share a node in common.  Finally, if $C_2$ is HalfRoster they share the cluster leader. 
\end{itemize}
\end{proof}

\begin{lemma}
\label{lemma:onlyonecluster}
During any regime, there is at most one cluster in the system that satisfies PAC for a given partition.
\end{lemma}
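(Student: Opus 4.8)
The plan is to argue by contradiction, combining the intersection guarantee of Lemma~\ref{lemma:onecommonnode} with the node-disjointness of clusters enforced by the reclustering protocol. First I would recall the key structural fact established in Section~\ref{subsec:reclustering}: each reclustering step partitions the reachable nodes into disjoint \emph{clusters}, where each cluster is a maximal set of nodes with full mutual reachability, and every node atomically adopts exactly one \texttt{ClusterMembers} view (together with its minted exchange number) at a time. Consequently, at any instant a node is a member of at most one cluster.

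Next I would suppose, toward a contradiction, that two distinct clusters $C_1$ and $C_2$ both satisfy PAC for partition $P$ during the same regime. By Lemma~\ref{lemma:onecommonnode}, any two distinct PAC-satisfying clusters must share at least one node; let $N$ be such a shared node, so $N \in C_1 \cap C_2$. But the disjointness property above says $N$ cannot simultaneously belong to two clusters. This contradiction shows that no two distinct clusters can both satisfy PAC for $P$ at the same time, which is exactly the uniqueness claimed for the regime.

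The main obstacle I anticipate is pinning down what ``during any regime'' means in real time and ensuring that the disjointness fact applies at precisely the instant at which both clusters would need to be serving. A regime for $P$ is identified by the $PR$ (exchange number) in effect when $P$ becomes available; because exchange numbers are minted per-cluster and two independently formed concurrent clusters could in principle carry the same value, ``same regime'' does not by itself collapse $C_1$ and $C_2$ into one cluster — it is Lemma~\ref{lemma:onecommonnode} plus disjointness that does the real work. I would therefore phrase the contradiction at the granularity of the shared node's view: since \texttt{ClusterMembers} is updated atomically (Step~4 of rebalance), the node $N$ guaranteed by Lemma~\ref{lemma:onecommonnode} reports membership in only one of $C_1,C_2$ at the moment each cluster evaluates PAC, so the two evaluations cannot both succeed over a common serving interval. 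This reduces the temporal subtlety to the already-established single-view, node-disjoint invariant and closes the argument.
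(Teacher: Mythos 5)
Your proposal is correct and matches the paper's own argument: both assume two distinct PAC-satisfying clusters in the same regime, invoke Lemma~\ref{lemma:mainonecommonnode} to force a shared node, and derive a contradiction with the disjointness of concurrently formed clusters guaranteed by the global clustering/consensus step. Your additional remarks on the temporal meaning of ``during any regime'' and the atomicity of the shared node's view are a careful elaboration of the same reasoning, not a different route.
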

\begin{proof}
Assume two clusters $C_1$ and $C_2$ satisfy PAC simultaneously in the same regime. Since cluster membership is determined via a global consensus protocol, the two clusters must be disjoint. But this contradicts Lemma~\ref{lemma:onecommonnode}, which states they must share a node.
\end{proof}

\mycomment{
\begin{lemma}
\label{lemma:monotonicregimenumbers}
The partition regime number (PR) increases monotonically across PAC-valid clusters for a given partition.
\end{lemma}
\begin{proof}
Let $C_1$ and $C_2$ be two PAC-valid clusters formed sequentially, with $C_1$ formed before $C_2$. By Lemma~\ref{lemma:onecommonnode}, there exists a node $N$ common to both clusters. Recall that the reclusterng step determines the maximum of the exchange numbers of the nodes in the cluster, increments this maximum by one and assigns each participating node (in particular, $N$) this new \textit{exchange number}. PR  at $N$ is set to the exchange number at the time of rebalance after reclustering. Hence, the PR in $C_2$ is strictly greater than in $C_1$.
\end{proof}
}

\begin{lemma}
\label{lemma:clusterreplicacommon}
Let $C_1$ and $C_2$ be two clusters available for partition $P$, with regime numbers $R_1$ and $R_2$ such that $R_1 < R_2$ and no intermediate regime exists where $P$ was available. Then at least one of the cluster replicas from $C_1$ is also present in $C_2$.
\end{lemma}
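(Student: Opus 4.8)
The plan is to reduce the claim to exhibiting a \emph{single} node that is a cluster replica of $C_1$ and also lies in $C_2$, and then to case-split on which PAC rule $C_2$ satisfies, handling each case with a short dedicated argument. Note first that Lemma~\ref{lemma:onecommonnode} alone is not enough: it guarantees a shared node, but that node need only be, e.g., a low-ranked majority-overlap node, not a cluster replica of $C_1$. So I will need a sharper argument in at least one case. Two elementary facts drive the rest. First, any roster replica of $P$ that is present in a cluster is automatically one of that cluster's cluster replicas: a roster replica has succession rank $\le RF$, so at most $RF-1$ nodes outrank it, and it therefore falls within the first $RF$ present succession nodes. Second, because $C_1$ is available it contains a roster replica by Lemma~\ref{lemma:onerosterreplica}, and (using $|C_1|\ge RF$, which holds for every available cluster) it has exactly $RF$ cluster replicas.

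With these in hand, the easy cases fall out. If $C_2$ satisfies \emph{AllRosterReplicas}, every roster replica of $P$ lies in $C_2$; taking the roster replica guaranteed in $C_1$ by Lemma~\ref{lemma:onerosterreplica}, it is a cluster replica of $C_1$ by the first fact and also lies in $C_2$. If $C_2$ satisfies \emph{SuperMajority}, then fewer than $RF$ roster nodes are absent from $C_2$; since $C_1$ has $RF$ distinct cluster replicas, they cannot all be absent, so at least one lies in $C_2$ by a pure pigeonhole count.

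The remaining and hardest case is when $C_2$ satisfies \emph{SimpleMajority} or \emph{HalfRoster}, both of which require a present node $F$ that is \emph{full} for $P$. Here I invoke the provenance of fullness. The rebalance full-status prediction treats $F$ as full only when $F.PR$ equals the new exchange number minus one, i.e. $F.PR = R_2-1$, and $PR$ advances only when $P$ becomes available. By hypothesis no regime strictly between $R_1$ and $R_2$ had $P$ available, and by Lemma~\ref{lemma:onlyonecluster} $C_1$ is the \emph{unique} available cluster at $R_1$; hence the most recent available regime of any node is $R_1$, forcing $F.PR \le R_1$. Combined with $F.PR = R_2-1 \ge R_1$, this pins $R_2 = R_1+1$ and $F.PR = R_1$, so $F$ is full at regime $R_1$. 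The crux is then the invariant that the nodes full at $R_1$ are exactly the cluster replicas of $C_1$: fullness is conferred only on cluster replicas during $C_1$'s rebalance and background migration, and the $PR$-tagging of full status prevents a stale full copy from an earlier regime from being counted at $R_1$. Granting this invariant, $F$ is a cluster replica of $C_1$ with $F\in C_2$, finishing the case.

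I expect the main obstacle to be making this fullness-provenance invariant precise and airtight: reconciling the abstract definition of \emph{full} (physically holding the latest committed version of every record) with the algorithm's $PR$-tagged, migration-driven full-status bookkeeping, and using the ``no intermediate available regime'' hypothesis together with Lemma~\ref{lemma:onlyonecluster} to exclude any other node from acquiring full status for regime $R_1$. The \emph{AllRosterReplicas} and \emph{SuperMajority} cases, by contrast, reduce cleanly to the roster-replica/cluster-replica containment fact and to counting, so essentially all the real work concentrates in the single-full-node argument.
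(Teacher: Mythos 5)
Your proposal follows essentially the same route as the paper's proof: a case split on which PAC condition $C_2$ satisfies, with roster-replica containment disposing of \emph{AllRosterReplicas}, a pigeonhole count over $C_1$'s $RF$ cluster replicas disposing of \emph{SuperMajority}, and the provenance of the full node handling \emph{SimpleMajority}/\emph{HalfRoster}. The paper's own proof is terser—it simply asserts that the full node ``from $R_1$'' was a cluster replica in $C_1$—so your elaboration (pinning $R_2 = R_1{+}1$ via the $PR = ER{-}1$ prediction rule together with the no-intermediate-availability hypothesis, and isolating the fullness-provenance invariant) is a more explicit rendering of exactly the step the paper takes for granted, not a different argument.
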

\begin{proof}
If $C_2$ satisfies SimpleMajority or HalfRoster, then it must include a full node from $R_1$, which was a cluster replica in $C_1$. If $C_2$ satisfies AllRosterReplicas, then by Lemma~\ref{lemma:onerosterreplica}, one of the roster replicas from $C_1$ is present in $C_2$. If $C_2$ satisfies SuperMajority, then one of the cluster replicas of $C_1$ will be in $C_2$.
\end{proof}

We now get into proving the reads and writes of LARK algorithm. 

\begin{Definition}
\label{def:replicated}
A version $V$ of a record is said to be \textbf{replicated} if any node in the cluster has marked it as replicated.
\end{Definition}

\begin{lemma}
\label{lemma:RFdups}
At all times, there are at least $RF$ nodes in the roster—at least one of which is a roster replica—that have the latest copy (the copy itself could be replicated or unreplicated) of any \textit{replicated} record and are considered duplicates.
\end{lemma}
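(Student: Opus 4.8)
The plan is to prove this as an inductive invariant maintained over the entire sequence of events that affect the partition: write (and re-replication) completions, reclusterings together with their rebalance and asynchronous migration steps, and duplicate removals. I would phrase the invariant precisely as follows: for every record that possesses a replicated version (Definition~\ref{def:replicated}), the set $D$ of duplicate nodes holding the record's latest copy satisfies $|D|\ge RF$ and contains at least one roster replica. The replicated/unreplicated status of the copy on a node in $D$ is irrelevant, exactly as the statement permits, so the invariant tracks which nodes hold the newest version rather than its marking.

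For the base case I would observe that a version becomes \emph{replicated} only after all $RF$ cluster replicas of the writing cluster have accepted it (Algorithm~\ref{alg:clientwrite}); a fresh client write or a re-replication re-establishes $D$ in the same way. Each cluster replica is a duplicate by definition (Section~\ref{subsec:duplicates}), giving $|D|\ge RF$. To secure the roster replica, I would combine Lemma~\ref{lemma:onerosterreplica}, which guarantees that any PAC-satisfying cluster contains a roster replica, with the observation that a roster replica occupies one of the first $RF$ succession-list positions, so whenever it is present it is automatically among the first $RF$ present nodes---that is, it is a cluster replica and hence lies in $D$.

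The inductive step is the substance of the argument, and the delicate case is a reclustering that carries the partition from an available regime $R_1$ to the next available regime $R_2$. Here I would invoke Lemma~\ref{lemma:clusterreplicacommon}: consecutive available regimes share at least one cluster replica. That shared node belonged to $D$ at $R_1$ and remains present in $C_2$, so it transports the latest copy across the regime boundary. Leader immigration (Step~5 of rebalance) then pulls this copy into the new leader, replica emigration (Step~6) propagates it to all $RF$ new cluster replicas, each of which is thereby a duplicate, and Lemma~\ref{lemma:onerosterreplica} again supplies a roster replica among them. The count is protected by the duplicate-removal rule of Section~\ref{subsec:duplicates}, which retires an old duplicate only after Step~6 has completed and the new cluster replicas have themselves become duplicates holding the latest versions; moreover, during any run of intervening regimes in which the partition is \emph{unavailable}, no duplicate is removed at all, which is precisely what lets me chain Lemma~\ref{lemma:clusterreplicacommon} across its ``no intermediate available regime'' hypothesis.

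The main obstacle I anticipate is ruling out any transient window---while the background migrations of Steps~5 and~6 are in flight and old duplicates are being retired---in which fewer than $RF$ duplicates hold the latest copy, or in which no roster replica is among them. The two delicate points are pinning down ``the latest copy'' during an in-flight write, when the newest version is unreplicated and has not yet reached every replica, and verifying that duplicate removal is strictly gated on the completion of emigration so that the old set $D$ and the newly established set overlap rather than leaving a gap. Once this gating is made explicit, the induction closes and the invariant holds at all times.
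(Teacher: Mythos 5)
Your proposal is correct and takes essentially the same route as the paper's (much terser) proof: the base case follows from the definition of a replicated version (all $RF$ cluster replicas accepted it, each is a duplicate, and Lemma~\ref{lemma:mainonerosterreplica} plus the succession-list ordering puts a roster replica among them), and maintenance follows from the rule of Section~\ref{subsec:duplicates} that a duplicate is retired only after Step~6 has installed the latest versions on all $RF$ new cluster replicas. One caveat: your invocation of Lemma~\ref{lemma:mainclusterreplicacommon} is superfluous and its supporting claim is not quite right---the shared cluster replica of $C_1$ need not belong to $D$ (hold the latest copy) if the migrations of regime $R_1$ never completed before the next reclustering; but this does not damage the proof, since the removal-gating rule you also invoke maintains the invariant by itself regardless of whether migration ever completes, which is precisely the paper's argument.
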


\begin{proof}
Once a record version is replicated, by definition, it must have been written to $RF$ nodes, which at that point are all cluster replicas and therefore duplicates. Lemma~\ref{lemma:onerosterreplica} guarantees that at least one of these is a roster replica.

Over time, as nodes are reclustered and rebalance occurs, these nodes may cease being cluster replicas and initiate migrations. However, as described in Section~\ref{subsec:duplicates}, when a node exits the duplicate set via migration, the record version is transferred to the new cluster replicas—maintaining the invariant that $RF$ duplicates exist, including one roster replica.
\end{proof}

\textbf{Proof Roadmap.} The goal of this section is to prove that LARK guarantees linearizability—even under asynchronous execution, partial failures, leader transitions, and message delays. We build the proof on a set of structural invariants that govern the evolution of record versions and the propagation of updates. The overall strategy is as follows:

\begin{itemize}
    \item \textbf{Lemmas~\ref{lemma:dupresR+m}--\ref{lemma:fullseeseverything}} show that once a version is replicated, it will be seen by any future leader before it performs a write. This guarantee is achieved through a combination of duplicate resolution and proactive migration. The proofs are in the Appendix.

    \item \textbf{Theorem~\ref{theorem:lineage}} proves the core safety property: no record version can have two children that are both replicated. This ensures that the version lineage remains a single chain.

    \item \textbf{Theorems~\ref{theorem:writeextendschain} and~\ref{theorem:linearizable}} establish that writes always extend the latest visible version, and reads return values consistent with this version chain—thereby ensuring linearizability.
\end{itemize}

\begin{lemma}
\label{lemma:dupresR+m}
Let $KV$ be a record belonging to partition $P$. Let $L$ be a leader that writes a version $V$ of $KV$ with record regime $R$, and assume that $V$ becomes replicated eventually. Consider a cluster $CL$ with regime $R + m$ (for some $m \geq 1$) satisfying the following conditions:

\begin{itemize}
    \item No writes have occurred to $KV$ since version $V$.
    \item Partition $P$ is available in regime $R + m$.
    \item No node is full for $P$ at the start of regime $R + m$.
    \item $L$ is not the leader of $CL$ for partition $P$.
\end{itemize}

Then, by the end of regime $R+1$ (for $m =1$) or by the beginning of regime $R + m$, (for $m > 1$), there exists at least one \textit{duplicate node} that has seen version $V$.

\end{lemma}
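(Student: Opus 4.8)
The plan is to induct on $m$---equivalently, to walk along the chain of regimes between $R$ and $R+m$ in which $P$ is available---using three earlier facts as workhorses: Lemma~\ref{lemma:RFdups} (a replicated version is, at all times, held by at least $RF$ duplicates, one of which is a roster replica), Lemma~\ref{lemma:clusterreplicacommon} (two consecutive available clusters share a cluster replica), and Lemma~\ref{lemma:onerosterreplica} (every PAC-satisfying cluster contains a roster replica). Since $V$ eventually becomes replicated and no write to $KV$ follows it, $V$ remains the latest version throughout, so Lemma~\ref{lemma:RFdups} keeps $V$ alive on a roster-replica duplicate at every instant. The real work is to show that such a $V$-bearing duplicate is \emph{reachable} by $CL$---present in the regime-$(R+m)$ cluster---so that $CL$'s leader can recover it via dup res.

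First I would exploit the hypothesis that no node is full at the start of regime $R+m$. Because both \textsf{SimpleMajority} and \textsf{HalfRoster} require a full node, this hypothesis forces $CL$ to be available via either \textsf{AllRosterReplicas} or \textsf{SuperMajority}. The \textsf{AllRosterReplicas} branch then closes immediately: all $RF$ roster replicas lie in $CL$, and by Lemma~\ref{lemma:RFdups} one roster replica is a duplicate holding $V$, so that duplicate is in $CL$ and has seen $V$.

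The remaining (\textsf{SuperMajority}) branch is where the induction is needed. Let $a$ be the largest regime strictly below $R+m$ in which $P$ is available (so $R\le a$, and no regime strictly between $a$ and $R+m$ is available). When $a=R$---in particular when $m=1$---every cluster replica of $C_R$ holds $V$, since $V$ was replicated in regime $R$ and replication requires all $RF$ cluster replicas of $C_R$ to acknowledge it; Lemma~\ref{lemma:clusterreplicacommon} then supplies a cluster replica $N$ of $C_R$ present in $CL$, and $N$, being a cluster replica, is a duplicate already holding $V$. For $m=1$ the conclusion is stated only ``by the end of regime $R+1$'' to absorb the possibility that $V$'s replication, or the leader's immigration that pulls $V$ from $N$, completes during regime $R+1$ rather than before it; for larger $m$ with $a=R$ these have long since completed, so a $V$-bearing duplicate already sits in $CL$ at its formation. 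When $a>R$ I would invoke the induction hypothesis at regime $a$, let $C_a$'s leader immigrate $V$ and then emigrate it toward $C_a$'s cluster replicas, and carry $V$ into $CL$ through the shared cluster replica of Lemma~\ref{lemma:clusterreplicacommon}.

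The hard part will be the \textsf{SuperMajority} branch when the preceding available regime's emigration is cut short by an intervening \emph{unavailable} reclustering, so that the particular cluster replica shared with $CL$ need not itself hold $V$. The resolution I would pursue couples Lemma~\ref{lemma:clusterreplicacommon} with the duplicate-retirement rule of Section~\ref{subsec:duplicates}: a $V$-holding node is never removed from the duplicate set until the new cluster replicas have received $V$, so $V$ can never be stranded outside the reachable duplicate set, and the shared cluster replica (a duplicate by construction) either holds $V$ itself or a $V$-holding duplicate that survives into $CL$ answers its leader's dup res. Making this invariant precise---and verifying that it is preserved across both available and unavailable reclustering steps, so that a $V$-bearing duplicate always intersects $CL$---is the crux of the argument.
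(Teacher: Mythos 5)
Your proposal has a genuine gap, and it sits exactly where the paper's proof does its real work: the case where $V$'s replication is still \emph{in flight} at the start of regime $R+m$. The lemma only assumes $V$ becomes replicated \emph{eventually}; at the start of regime $R+m$ it is possible that no node in $CL$ has accepted $V$ at all (every replica write for $V$ is still delayed in the network, and $L$ itself may be absent from $CL$). In that scenario Lemma~\ref{lemma:RFdups} gives you nothing --- it applies only to versions already marked replicated --- so your claim that it ``keeps $V$ alive on a roster-replica duplicate at every instant'' is unjustified, as is your assertion that for $a=R$ and $m>1$ replication has ``long since completed,'' and your statement that every cluster replica of $C_R$ already holds $V$. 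Your duplicate-retirement invariant cannot rescue this case either, because there is no $V$-holding duplicate to retire. The paper closes this case with a protocol-level argument you never invoke: choose a cluster replica $X_i$ that must accept $V$ for it to become replicated and that lies in $CL$ (such overlap exists because $CL$ satisfies SuperMajority or AllRosterReplicas); if $X_i$ has not accepted $V$ by the time its $ER$ reaches $R+m$ with $m>1$, then \textsf{LeaderNotTooOld} ($RR+1 \ge ER$) fails and \textsf{SameLeaderRegime} fails (the leader changed), so $X_i$ rejects the delayed write and $V$ can never become replicated --- contradicting the hypothesis. The same argument forces acceptance by the end of regime $R+1$, which is precisely where the lemma's asymmetric ``end of $R+1$'' clause for $m=1$ comes from; you instead treat that clause as slack to absorb timing, rather than something to be derived from the \textsc{Replica-Write} acceptance conditions.

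Separately, your induction along the chain of available regimes is circular as set up. The induction hypothesis can only be invoked at an intermediate available regime $a$ if \emph{no node is full} at the start of regime $a$, which you cannot assume; handling the full-node case is the content of Lemmas~\ref{lemma:fullnodeseesR-1} and~\ref{lemma:fullseeseverything}, and those lemmas themselves rely on the present one, so appealing to ``the leader immigrates $V$ and emigrates it onward'' smuggles in facts that are proved later. The paper avoids induction here entirely: it is a direct two-case analysis --- either some node of $CL$ holds a replicated copy of $V$ at the start of $R+m$ (then Lemma~\ref{lemma:RFdups} applies and overlap with $CL$ finishes the argument), or none does, in which case the analysis splits on \emph{when} the overlapping replica $X_i$ accepted $V$ and is resolved through the acceptance conditions of Algorithm~\ref{alg:replicawrite} rather than through migration chains.
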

\begin{proof}
Since no node is full at the beginning of regime $R + m$, $CL$ must satisfy one of the following PAC conditions: SuperMajority or AllRosterReplicas.

We consider two major cases:

\begin{casesp}
\item \textbf{There exists at least one node in $CL$ in which $V$ was successfully replicated by the start of regime $R+m$.}

One of the RF nodes mentioned in Lemma~\ref{lemma:RFdups}, will be in CL and will be a duplicate that has seen V at the start of $R+m$.

\item \textbf{No node in $CL$ contains a replicated version of $V$ at the start of $R+m$.}
\label{case:notreplicated}

Let $X_i$ be one of the cluster replicas that accepts $V$ (in line 11 of \textsc{Replica-Write Algorithm}) and is in $CL$ (Such a node will exist as CL is a SuperMajority or AllRosterReplicas). There are two subcases:

\begin{casesp}
    \item \textbf{$X_i$ has not yet accepted $V$ when its ER becomes $R + m$ (as part of reclustering for $CL$).} \\
    In this case, if $m > 1$, $X_i$ will not accept a write for version $V$ with record regime $R$, since Condition C1 (which requires $RR + 1 \geq ER$) of the \textsc{Replica-Write} algorithm will not be satisfied, and neither will Condition C2. Hence, $X_i$ cannot contribute to $V$ becoming replicated, contradicting the assumption that $V$ does eventually get replicated. It also follows from the above that for $v$ to become replicated eventually it has to be accepted by the end of regime $R+1$. 

    \item \textbf{$X_i$ accepted $V$ before its $ER$ became $R + m$.} \\
    In this case, $X_i$ holds an \textit{unreplicated} copy of $V$ at the start of regime $R + m$. $X_i$ has seen V. Either it is a duplicate or  by an argument analogous to Lemma~\ref{lemma:RFdups}, there will be at least $RF$ nodes in the system that have seen the unreplicated version of $V$ and are duplcates, at least one of which is a roster replica - one of these nodes will be in CL.

    As an aside, this unreplicated version may eventually be re-replicated in regime $R + m$ or later. This operation is a no-op from a logical perspective, as the content of the version remains the same; the only difference is that it becomes associated with a new regime. This does not affect the correctness of the protocol.
\end{casesp}
\end{casesp}

In all cases, at least one duplicate in $CL$ has seen version $V$ when regime $R + m$ begins.
\end{proof}

\begin{lemma}
\label{lemma:fullnodeseesR-1}
Let $KV$ be a record belonging to partition $P$ with a replicated version $V$ written by a leader $L$ with record regime $R$. Assume there has been no write to $KV$ since $V$, and that $L$ is not the leader of $P$ in regime $R+1$. Then any node $N$ that becomes full for $P$ at any point during regime $R+1$ is guaranteed to have seen $V$ once $V$ is replicated and $N$ becomes full. Further, a full node $N$ is guaranteed to see $V$ by the end of regime $R+1$.
\end{lemma}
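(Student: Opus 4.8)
The plan is to reduce the statement to a claim about the \emph{leader} of regime $R+1$ and then propagate it to every full node. First I would observe that, by hypothesis, $L$ is not the regime-$(R+1)$ leader, so whoever leads $R+1$ must have (re)acquired $V$ through the rebalance machinery rather than by having authored it. A replica attains full status in regime $R+1$ in only two ways: it inherits fullness from regime $R$, or it receives the latest versions from the leader during replica emigration (Step~6). Hence it suffices to show that the regime-$(R+1)$ leader has seen $V$ by the time it becomes full; an emigrated full replica then sees $V$ as a consequence, and inherited-full nodes are handled separately. This reduction is exactly what makes the ``by the end of regime $R+1$'' clause follow.

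The main case split I would use mirrors the hypotheses of Lemma~\ref{lemma:dupresR+m}: whether some node is full for $P$ at the \emph{start} of regime $R+1$. If no node is full at the start of $R+1$, then the leader is not full and must run leader immigration (Step~5), dup-res'ing every duplicate. Invoking Lemma~\ref{lemma:dupresR+m} with $m=1$ yields a duplicate in the cluster that has seen $V$ by the end of regime $R+1$; since there are no writes to $KV$ after $V$, version $V$ carries the maximal logical clock $LC$ among all versions of $KV$, so the dup-res of Algorithm~\ref{alg:dupres} selects exactly $V$. The leader therefore acquires $V$ and has seen it by the time it becomes full.

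If instead some node $F$ is full at the start of regime $R+1$, I would argue that $F$ already holds $V$: by the continuity rule for full status ($PR = ER-1$ and full in regime $R$), $F$ was a cluster replica of the unique regime-$R$ cluster (Lemma~\ref{lemma:onlyonecluster}), and because $V$ is \emph{replicated} (Definition~\ref{def:replicated}) all $RF$ regime-$R$ cluster replicas acknowledged $V$, so $F$ holds $V$. The regime-$(R+1)$ leader is then either $F$ itself (done), or it is not full and immigrates, in which case $F$ is a duplicate carrying $V$ at maximal $LC$ and dup-res once more selects $V$. Either way the leader sees $V$; replica emigration carries $V$ to every other full replica, and inherited-full nodes already hold $V$, which together establish both the per-node guarantee and the end-of-$R+1$ claim.

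The delicate point I expect to wrestle with is timing rather than structure: I must align the instant ``$N$ becomes full'' with the instant ``$V$ is replicated,'' since a node could be marked full with respect to the versions it currently knows while $V$ still lives only as an unreplicated copy on some duplicate. The conditional phrasing (``once $V$ is replicated \emph{and} $N$ becomes full'') is precisely what lets me evaluate both predicates at a common moment and then apply ``$V$ replicated $\Rightarrow$ all $RF$ regime-$R$ replicas hold $V$.'' A related subtlety is that $V$ may only attain replicated status during regime $R+1$ (late acknowledgements, or a logically inert re-replication); here I would lean on the end-of-$R{+}1$ acceptance bound already derived inside Lemma~\ref{lemma:dupresR+m}'s Case~2, which guarantees the duplicate queried during immigration still holds $V$ when the leader dup-res'es it.
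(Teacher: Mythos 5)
Your proof has a genuine gap, and it sits exactly at the point you flag as ``delicate'': delayed replica writes. Both of your cases assume that whatever node feeds the new leader (a duplicate in your first case, the full node $F$ in your second) already holds $V$ at the moment the leader immigrates from it. That is not guaranteed. For the second case: ``full'' is defined with respect to the latest \emph{committed} versions, so $F$ can legitimately be full at the start of regime $R+1$ while the replica write for $V$ is still in flight; your inference ``$V$ replicated $\Rightarrow$ all $RF$ regime-$R$ replicas hold $V$'' is valid only at the (later) instant $V$ actually becomes replicated, not at the earlier instant the leader becomes full by copying $F$'s state, and migration is a one-shot event---data reaching $F$ afterwards does not retroactively flow to the leader. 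For the first case: Lemma~\ref{lemma:dupresR+m} with $m=1$ only promises a duplicate that has seen $V$ by the \emph{end} of regime $R+1$, whereas the leader's immigration dup-res happens at some arbitrary earlier point in that regime; there is no ``acceptance bound'' in that lemma ensuring the queried duplicate holds $V$ when queried, so your appeal to it is circular. Concretely: $L$ sends the replica write for $V$, the message is delayed, the new leader dup-res'es and becomes full seeing only $V$'s parent, and only then does the delayed write arrive and get accepted (\textsf{LeaderNotTooOld} still allows it, since $RR+1 = R+1 \geq ER$). Your argument concludes nothing about this run, yet $V$ may still go on to become replicated in it.

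What saves the lemma in precisely these runs is a mechanism your leader-centric reduction cannot express, and it is the heart of the paper's proof. The paper splits on whether $N$ itself was a cluster replica in regime $R$. If it was (paper's Case~1), then $N$ belongs to the very replica set whose unanimous acceptance defines ``$V$ is replicated,'' so $N$ must accept the (possibly delayed) write itself, and \textsf{LeaderNotTooOld} bounds that acceptance to occur within regime $R+1$---no dup-res is involved, and this also yields the end-of-regime clause. If it was not (paper's Case~2), the paper chases the propagation path: a common node $X$ (Lemma~\ref{lemma:clusterreplicacommon}) either accepted $V$ before its regime changed---and then PR-Match migration carries $V$ through the new leader to $N$---or accepted it afterwards, in which case \textsf{LeaderInCluster} forces $L$ itself into the regime-$(R+1)$ cluster, so either $L$'s migration into the new leader carries $V$, or $L$'s replica write goes out against the regime-$(R+1)$ replica set and hence reaches $N$ directly. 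These direct-write and forced-membership arguments are what close the delayed-write scenario; to repair your proof you would have to abandon the claim that the leader always sees $V$ by the time it becomes full, and instead prove the conditional statement (seen ``once $V$ is replicated \emph{and} $N$ is full'') by reasoning about the replica sets that $L$'s write actually targets.
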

\begin{proof}

We consider two possibilities for node $N$ which is a cluster replica in regime $R+1$:

\begin{casesp}
    \item \textbf{$N$ is a cluster replica in regime $R$.} \\
    In this case, $N$ either receives the replica write for version $V$ during regime $R$, or during regime $R+1$. It cannot be later than regime $R+1$ as Condition C1 of the \textsc{Replica-Write} algorithm has to be satisfied (Condition C2 is not satisfied by assumption as the leader has changed).

    \item \textbf{$N$ is not a cluster replica in regime $R$.} \\
    In this case, $N$ was not full in regime $R$ and becomes full only through migration during regime $R+1$. Let the cluster leader in regime $R+1$ be $M$. By Lemma~\ref{lemma:clusterreplicacommon}, there exists at least one node $X$ that is both a cluster replica in regime $R$ and a member of the cluster in regime $R+1$.  Note that $X$ could be $M$ but that only makes some part of the arguments below no-ops. Note that by \textbf{PR Match for Migration} requirement, $X$ must first update its partition regime to $R+1$ before migrating into $M$, and subsequently into $N$.

    We now consider two subcases, based on when $X$ receives the replica write for $V$:

    \begin{casesp}
        \item \textbf{$X$ receives the replica write for $V$ while its $PR = R$.} \\
        In this case, $X$ sees $V$ before its regime transitions to $R+1$. It will carry $V$ into $M$ during migration, and $M$ will propagate $V$ to $N$. Thus, $N$ sees $V$ upon becoming full.

        \item \textbf{$X$ receives the replica write for $V$ while its $PR = R+1$.} \\
        Let $M'$ be the leader in regime $R$ who wrote $V$. For $X$ to accept a write from $M'$ in regime $R+1$, Condition A of the \textsc{Replica-Write} algorithm requires that $M'$ be part of $X$'s current cluster, making $M'$ a duplicate at the beginning of regime $R+1$.

        Now consider two further subcases, depending on when $M'$ writes its local copy of $V$:
        \begin{itemize}
            \item If $M'$ writes its local copy (line 17 of Algorithm~\ref{alg:clientwrite}) \textit{before} migrating into $M$, then $M$ sees $V$ through migration and propagates it to $N$.
            \item If $M'$ writes its local copy \textit{after} migrating into $M$, then the write must use a replica set corresponding to regime $R+1$ or greater. 
            \begin{itemize}
                \item If the replica set corresponds to regime $R+1$, then $N$ is part of that replica set and will receive the write directly while it is regime $R+1$ (otherwise $N$ will reject the write by condition C1 of \textsc{Replica-Write} algorithm).
                \item If the replica set corresponds to a regime strictly greater than $R+1$, then the replica write will be rejected by Condition C1 of \textsc{Replica-Write} algorithm (as $RR$ is $R$ and $ER$ is greater than $R+1$ at $X$). Note that Condition C2 does not hold by the assumptions of the lemma. This contradicts the assumption that $V$ is successfully written with regime $R$.
            \end{itemize}
        \end{itemize}
    \end{casesp}
\end{casesp}

In all cases, $N$ is guaranteed to have seen $V$ once it becomes full in regime $R+1$.
\end{proof}

\begin{lemma}
\label{lemma:fullseeseverything}
Let $KV$ be a record with a replicated version $V$ with a record regime of $R$. Assume no write to $KV$ has occurred since $V$, and that the leader $L$ of regime $R$ is not the leader of partition $P$ in regime $R+k$ for some $k \geq 2$. Then any node $N$ that becomes full for $P$ at any point during regime $R+k$ is guaranteed to see $V$ as soon as $N$ becomes full.
\end{lemma}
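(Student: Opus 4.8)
The plan is to prove Lemma~\ref{lemma:fullseeseverything} by strong induction on $k$, taking Lemma~\ref{lemma:fullnodeseesR-1} as the base case. That lemma already establishes the claim for the transition into regime $R+1$, and any node that is full in regime $R$ necessarily holds $V$ once $V$ is replicated—because $V$ is then the latest committed version of $KV$ (no write has occurred since) and \emph{full} means holding the latest committed version of every record. The inductive hypothesis will assert that for every available regime $R+j$ with $1 \le j < k$, any node that becomes full in $R+j$ has seen $V$.

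First I would reduce the statement to a claim about the leader. Let $CL$ be the cluster of regime $R+k$ and $M \neq L$ its leader. A node $N$ can become full in $R+k$ in only two ways: by carrying its full status forward (Step~1 of rebalance), or by receiving \emph{replica emigration} from $M$ (Step~6), since once $M$ is full it propagates the latest version of every record—in particular the unchanged $V$—to all cluster replicas. The carry-forward route is immediate: if $N$ carried fullness forward, then its PR was exactly $R+k-1$, so regime $R+k-1$ was available with $N$ full, and the inductive hypothesis (valid since $1 \le k-1 < k$) gives that $N$ saw $V$. This leaves the emigration route, which reduces to showing that $M$ itself sees $V$; if $M$ is full on entry the same carry-forward argument applies, so the crux is the case where $M$ is not full and must acquire $V$ through \emph{leader immigration} (Step~5).

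For that case I would invoke Lemma~\ref{lemma:clusterreplicacommon} on the most recent available regime $R+k'$ (with $0 \le k' < k$) and $R+k$—consecutive available regimes by construction—to obtain a node $X$ that is a cluster replica of the $R+k'$ cluster and is present in $CL$. Since cluster replicas are duplicates, $X$ is a duplicate in $CL$, and $M$'s immigration queries all duplicates in its cluster; it therefore remains only to argue that $X$ (or some duplicate reachable in $CL$) holds $V$. When $k' \ge 1$, the inductive hypothesis applied to $R+k'$ yields that its leader became full holding $V$ and then, via emigration, drove its cluster replicas—including $X$—to hold $V$; when $k'=0$ the node $X$ is one of the $RF$ replicas to which $V$ was replicated in regime $R$ and thus holds $V$ directly. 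Either way $M$ sees $V$ during immigration, becomes full holding $V$, and Step~6 then guarantees that $N$ holds $V$.

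The main obstacle I expect is the timing of emigration across the regime boundary: Lemma~\ref{lemma:clusterreplicacommon} guarantees $X$ is present in $CL$, but I must show $X$ actually carries $V$ at the instant regime $R+k$ begins, rather than merely guaranteeing a system-wide duplicate as in Lemma~\ref{lemma:RFdups}. I would discharge this exactly as in the replicated/unreplicated split of Lemma~\ref{lemma:dupresR+m}: if emigration of $V$ to $X$ completed, $X$ holds a replicated copy; otherwise I track the node that accepted $V$ and use the \textbf{PR Match for Migration} constraint together with Conditions \textsf{LeaderNotTooOld} and \textsf{LeaderNotTooNew} of the \textsc{Replica-Write} algorithm to show that an unreplicated copy of $V$ must still reside on a duplicate in $CL$—otherwise $V$ could never have become replicated, contradicting the hypothesis. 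Handling the possibly-unavailable intermediate regimes $R+1,\dots,R+k-1$ cleanly, so that the induction always steps between genuinely consecutive available regimes, is the part demanding the most care.
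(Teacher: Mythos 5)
Your skeleton---induction on $k$ anchored by Lemma~\ref{lemma:fullnodeseesR-1}, reduction to whether the regime-$(R+k)$ leader sees $V$, and handling carry-forward fullness via the induction hypothesis---matches the paper's proof. The genuine gap is in your crux case, where the leader $M$ is not full at the start of regime $R+k$. You claim the common node $X$ supplied by Lemma~\ref{lemma:clusterreplicacommon} holds $V$, arguing for $k' \ge 1$ that the inductive hypothesis applied to $R+k'$ ``yields that its leader became full holding $V$ and then, via emigration, drove its cluster replicas---including $X$---to hold $V$.'' This step fails for two reasons. First, the induction hypothesis is conditional: it says that \emph{if} a node becomes full in regime $R+k'$ then it has seen $V$; it does not guarantee that the leader of $R+k'$ (or any node at all) ever became full in that regime. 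Second, Steps 5 and 6 of rebalance are asynchronous and can be cut short by the next reclustering, so even if that leader did become full, emigration to $X$ need not have completed. Concretely, $X$ may be a node that became a cluster replica only in regime $R+k'$ and never received $V$; the actual holders of $V$ can be the old regime-$R$ replicas, which remain duplicates precisely because emigration never completed and responsibility was never transferred (Section~\ref{subsec:duplicates}). So Lemma~\ref{lemma:clusterreplicacommon} cannot pin the holder of $V$ to $X$; the most that is true is that \emph{some} duplicate in the new cluster has seen $V$.

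That weaker statement is exactly Lemma~\ref{lemma:dupresR+m}, and the paper's proof simply invokes it. Its case split is: some node full at the start of the regime (use the induction hypothesis, then full node $\to$ leader $\to$ emigration into $N$), versus no node full at the start (apply Lemma~\ref{lemma:dupresR+m}, whose preconditions---partition available, no full node, $L$ not the leader---are exactly met, to obtain a duplicate in the cluster that has seen $V$; Step 5 has the non-full leader query \emph{all} duplicates, so it sees $V$ before becoming full and propagates it to $N$). Your fallback sketch (``track the node that accepted $V$'' using PR-Match and the \textsc{Replica-Write} conditions) is essentially a re-derivation of Lemma~\ref{lemma:dupresR+m}'s proof, so the repair is simple: drop the argument through Lemma~\ref{lemma:clusterreplicacommon} and the claim about $X$, and cite Lemma~\ref{lemma:dupresR+m} directly in the no-full-node case. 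This also disposes of your worry about unavailable intermediate regimes, since that lemma is stated for arbitrary gaps $m \ge 1$.
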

\begin{proof}
We prove the statement by induction on $k$.

\textbf{Base case ($k = 2$):} If any node $N'$ ($N'$ could be $N$) that is part of the cluster in regime $R+2$ was also full in regime $R+1$, then by Lemma~\ref{lemma:fullnodeseesR-1}, $N'$ will have seen $V$ by the end of regime $R+1$. $N'$ will either become the leader or migrate its data into the leader which in turn will migrate into $N$ and therefore $N$ will see $V$. If no node of the cluster is full at the start of regime $R+2$ then all conditions of Lemma~\ref{lemma:dupresR+m} are satisfied and there exists some duplicate node that has seen $V$ at the beginning of regime $R+2$. The cluster leader will perform dup-res with this duplicate node and see $V$ and migrate that into $N$.

\textbf{Inductive step:} Assume the statement holds for some fixed $k$; that is, any node that becomes full during regime $R+k$ will have seen $V$ once $V$ is replicated. We now show that the statement also holds for $k+1$, i.e., for regime $R + (k+1)$.

Let $N$ be a node that becomes full for $P$ during regime $R + (k+1)$. We consider two main cases:

\begin{casesp}
    \item \textbf{$N$ was already full at the end of regime $R + k$.} \\
    By the induction hypothesis, $N$ must have seen $V$.

    \item \textbf{$N$ was not full at the end of regime $R + k$ but becomes full in regime $R + (k+1)$.} \\
    We consider two subcases:
    \begin{casesp}
        \item \textbf{Some node $N'$ was full at the beginning of regime $R + (k + 1)$.} \\
        By the induction hypothesis, $N'$ has seen $V$, as $N$ was full at the end of regime $R+k$. During regime $R + (k+1)$, $N'$ either becomes the leader or migrates its data into the leader. The leader, in turn, either migrates into $N$ or is $N$ itself. Therefore, $N$ will receive the version $V$ through $N'$.

        \item \textbf{No node was full at the beginning of regime $R + (k+1)$.} \\
       Since partition $P$ is available (as $N$ becomes full), all preconditions of Lemma~\ref{lemma:dupresR+m} are satisfied. Thus, the cluster formed in regime $R + (k+1)$ contains at least one \textit{duplicate} node that has seen $V$.

        The leader of this cluster (possibly $N$ itself) will invoke \textsc{Dup-Res} for $KV$ before becoming full. Consequently, it will see $V$, and since $N$ becomes full in this regime (either as leader or via migration from leader), it will also see $V$.
    \end{casesp}
\end{casesp}

In all cases, node $N$ sees $V$ once it becomes full in regime $R + (k+1)$. This completes the inductive step.
\end{proof}

\begin{theorem}
\label{theorem:lineage}
For a system operating under the rules of Section~\ref{subsec:asyncreadsandwrites}, at no point in time can there exist a record $KV$ with a version $V$ that has two distinct children, both of which are replicated.
\end{theorem}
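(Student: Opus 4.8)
The plan is to argue by contradiction: suppose some record $KV$ reaches a state in which a version $V$ has two distinct children $V_1$ and $V_2$ that are \emph{both} replicated. Let $R_1$ and $R_2$ be their record regimes (their $RR$ tags), and assume without loss of generality that $R_1 \le R_2$. I would split the argument on whether these regimes coincide.

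First I would dispose of the case $R_1 = R_2 = R$. By Lemma~\ref{lemma:onlyonecluster}, at most one cluster serves $P$ in regime $R$, so both children must be produced by a single leader $L$. I would then observe that a leader processes writes sequentially: in \textsc{Client-Write} (Algorithm~\ref{alg:clientwrite}) the new record is applied on top of the leader's current local copy, which becomes the leader's latest version once it is marked replicated; the only way a version is discarded is when replication \emph{fails}, in which case that version is never replicated. Hence, once $V_1$ is replicated, $L$'s latest version is $V_1$, and any subsequent write extends $V_1$ rather than $V$ (and symmetrically if $V_2$ committed first). A single leader therefore cannot create two distinct replicated children of the same $V$, contradicting the assumption.

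The substantive case is $R_1 < R_2$. Here $V_2$ is written by the leader $L_2$ of regime $R_2$, and $V_2$ being a child of $V$ means that when $L_2$ applied the client write its locally visible latest version was $V$ and \emph{not} the already-replicated $V_1$. I would contradict this using the ``once replicated, always seen by a future leader'' guarantee assembled by Lemmas~\ref{lemma:dupresR+m}, \ref{lemma:fullnodeseesR-1}, and \ref{lemma:fullseeseverything}: since $V_1$ is replicated with regime $R_1$, any node that becomes full in a later regime must have seen $V_1$. Because $L_2$ must either already be full or run \textsc{Dup-Res} to reach the latest version before writing, $L_2$ is guaranteed to see $V_1$, so its latest is $V_1$ or a descendant and $V_2$ would extend that, not $V$ --- the contradiction. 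I would invoke Lemma~\ref{lemma:fullnodeseesR-1} for $R_2 = R_1 + 1$ and Lemma~\ref{lemma:fullseeseverything} for $R_2 \ge R_1 + 2$, and note that leader retention is harmless: if the node that wrote $V_1$ is still leading in regime $R_2$, it simply keeps $V_1$ as its latest, which again forces $V_2$ to extend $V_1$.

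The main obstacle is matching the hypotheses of those lemmas, in particular their requirement that ``no write to $KV$ has occurred since $V_1$.'' To secure this I would take $V_1$ to be the \emph{most recent} replicated version on its branch before $V_2$ is written, and argue via minimality of a first violation that the replicated versions observed so far form a single chain, so no replicated write intervenes. The delicate points I expect to spend the most care on are (i) showing that intervening \emph{unreplicated} versions and the re-replication step---which merely retags a version with the current $PR$ and is logically a no-op, as noted in the proof of Lemma~\ref{lemma:dupresR+m}---do not constitute a genuine intervening write that would invalidate the lemma hypothesis, and (ii) the boundary case where $V_1$ is still completing replication through delayed replica-writes during regime $R_1 + 1$, which is precisely the situation the \textsc{Replica-Write} regime conditions and Lemma~\ref{lemma:RFdups} are designed to cover.
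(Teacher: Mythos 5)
Your overall strategy matches the paper's: argue by contradiction, split on whether the regime-$R_2$ leader is full or must run \textsc{Dup-Res}, and use Lemmas~\ref{lemma:dupresR+m}--\ref{lemma:fullseeseverything} to force the new leader to see the earlier replicated child. Your same-regime case ($R_1=R_2$) via Lemma~\ref{lemma:onlyonecluster} plus sequential processing at a single leader is fine (the paper dismisses that case as an implementation-level sequencing matter), and your handling of $R_2 \ge R_1+2$ and of a full leader lines up with the paper's use of Lemma~\ref{lemma:dupresR+m} (with $m>1$) and Lemma~\ref{lemma:fullseeseverything}.

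The genuine gap is the adjacent-regime dup-res case, $R_2 = R_1+1$ with $L_2$ not full, which you dispatch with ``$L_2$ is guaranteed to see $V_1$'' --- but the lemmas do not deliver that. Lemma~\ref{lemma:dupresR+m} with $m=1$ only guarantees that some duplicate has seen $V_1$ \emph{by the end of} regime $R_1+1$, not by the time $L_2$ performs its dup-res; the replica-writes for $V_1$ may still be in flight when the dup-res fires, land afterwards, and then $V_1$ becomes replicated without $L_2$ ever having seen it. Lemma~\ref{lemma:fullnodeseesR-1}, which you invoke for exactly this case, does not apply either: it concerns nodes that become \emph{full}, whereas a leader that runs dup-res is by construction not full when it writes (indeed no node in its cluster was full at rebalance, else leader selection would have produced a full leader and Line~9 of Algorithm~\ref{alg:clientwrite} would skip dup-res). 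Closing this race is the bulk of the paper's proof: its Case~1 splits on whether the old leader $X_1$ is in the new cluster and, for a common replica $Z$, on $Z$'s $PR$ value when the dup-res arrives ($PR<R_1$, $PR=R_1$, $PR=R_2$), using the \textsc{Replica-Write} conditions (\textsf{LeaderInCluster}, \textsf{LeaderNotTooOld}, \textsf{LeaderNotTooNew}, \textsf{NodeInReplicaSet}) to show that in every interleaving either the dup-res finds $C_1$ or the delayed replica-write for $C_1$ must be rejected, contradicting that $C_1$ is replicated. You correctly flag this as your ``delicate point (ii)'' but supply no argument for it, so the proposal is missing the central piece of the proof.
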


\begin{proof}
\begin{Assumption}

\label{assumption:tworeplicatedchildren}
Assume, for contradiction, that a record $KV$ has a version $V$ with two children $C_1$ and $C_2$, both of which are replicated. Let the record regimes of $C_1$ and $C_2$ be $R_1$ and $R_2$, respectively, with $R_1 < R_2$. Let $C_2$ be the version with the smallest logical clock (LC) among all versions with regime $R_2$.

Assume $RF = k$. Let $X_1, X_2, \ldots, X_k$ be the replicas that participated in $C_1$, with $X_1$ as the leader when $C_1$ was written. Similarly, let $Y_1, Y_2, \ldots, Y_k$ be the replicas that participated in $C_2$, with $Y_1$ as the leader for $C_2$.

Assume $Y_1 \ne X_1$ (i.e., the two leaders are different). Otherwise, $Y_1$ would have seen $C_1$ before writing $C_2$. We will not formally prove that concurrent writes to the same leader will be properly sequenced with regard to their regimes, any reasonable implementation would take care of that.

\end{Assumption}
This scenario is illustrated in Figure~\ref{fig:tworeplicatedchildren}.

\begin{figure}[t]
\centering
\includegraphics[width=1.0\linewidth, height=2.5in]{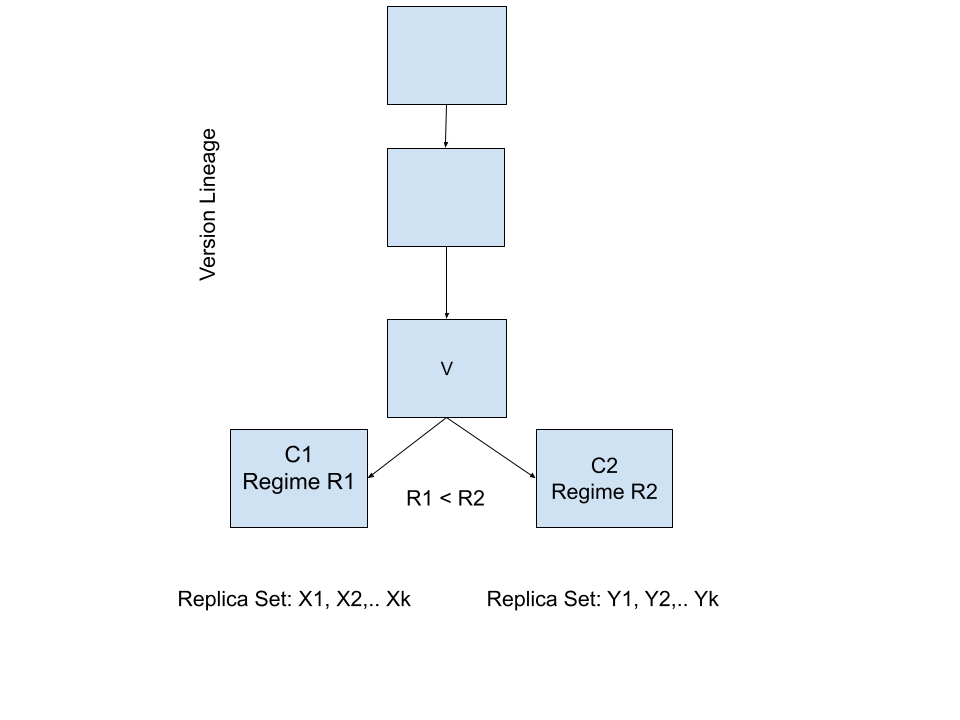}
\caption{Version $V$ with Two Replicated Children $C_1$ and $C_2$}
\label{fig:tworeplicatedchildren}
\end{figure}

We now consider two main cases for how $Y_1$ writes $C_2$.

\begin{casesp}

\item \textbf{$Y_1$ performs a \textsc{Dup-Res} for $KV$ (Line 10, Algorithm~\ref{alg:clientwrite}).} \\

If $R_2 > R_1 + 1$, Lemma~\ref{lemma:dupresR+m} guarantees that the dup-res will find $C_1$, contradicting the assumption. So we focus on the case where $R_2 = R_1 + 1$.

Since $Y_1$ performs dup-res, the cluster ($CL_2$) must satisfy either \textbf{SuperMajority} or \textbf{AllRosterReplicas}. Therefore, at least one cluster replica that participated in $C_1$ (say, some $Z \in \{X_1, \ldots, X_k\}$) must also be in $CL_2$. 

\begin{casesp}
    \item \textbf{$X_1 \notin CL_2$}
    \label{item:x1notincl2}

    $Z$ has to be a duplicate when its $ER=R_2$. If $Z$ ever rebalanced as part of regime $R_1$ it would become a clsuter replica (as per its own view) and therefore a duplicate. If $Z$ was not a cluster replica in $R_1-1$, it could not have accepted the replica write as part of regime $R_1-1$ (Condition \textsf{NodeInReplicaSet} is not satisifed). $Z$ can not accept in a regime lower than $R1-1$ as Condition \textsf{LeaderNotTooNew} will not be satisfied. If $Z$ accepts $C_1$ when its regime is $R_2$, it has to be a replica (as per Condition \textsf{NodeInReplicaSet}) and therefore a duplicate. As a result $Z$ will receive the dup-res from $Y_1$.

    For $Y_1$ to not see $C_1$ during dup-res, the dup-res must occur before $Z$ receives a replica write for $C_1$. Since $Z$ participates in $CL_2$ during dup-res, its exchange regime ($ER$) must be $R_2$ (it cannot be greater than $R_2$ due to Condition C1 of the \textsc{Replica-Write} Algorithm for $C_1$). For the same reasons, the $PR$ at $Z$ can not be greater than $R_2$ when the dup-res from $Y_1$ arrives. We consider three cases based on the value of $PR$:

    \begin{casesp}
        \item \textbf{$PR < R_1$ at $Z$ when dup-res from $Y_1$ arrives}
            The replica write for $C_1$ can not happen when $PR < R_1$ by Condition D of the \textsc{Replica-Write} Algorithm as $ER$ is already $R_2$ - it has to be at a later $PR$. The next rebalance however will make the $PR$ at least $R_2$. It can not be greater than $R_2$ as Condition C1 of the \textsc{Replica-Write} Algorithm will fail for the replica write of $C_1$. Therefore, $PR=R_2$ at $Z$ when the replica write of $C_1$ happens, but this means $X_1$ is in the cluster in regime $R_2$ (by Condition A of the \textsc{Replica-Write} Algorithm). This is a contradiction to the assumption of Case~\ref{item:x1notincl2}.
        
        \item \textbf{$PR = R_1$ at $Z$ when dup-res from $Y_1$ arrives}

        Since dup-res from $Y_1$ to $Z$ succeeds, $Y_1$ must be in $Z$'s cluster in $R_1$. $Y_1$ is a cluster leader in regime $R_2$ and is therefore a roster replica. This implies $Y_1$ will be a cluster replica in regime $R_1$ and will receive the replica write for $C_1$. Contradiction to Assumption~\ref{assumption:tworeplicatedchildren}.

        \item \textbf{$PR = R_2$ at $Z$ when dup-res from $Y_1$ arrives}

        The replica write from $X_1$ must occur in $R_2$, implying $X_1$ is in $Z$'s cluster in $R_2$. Thus, $X_1 \in CL_2$, contradicting the premise that $Y_1 \ne X_1$ and $Y_1$ did not see $C_1$.
    \end{casesp}

    \item \textbf{$Z = X_1$} \\
    
    This implies $Y_1$ issued dup-res to $X_1$ before $X_1$ wrote $C_1$ locally (otherwise $Y_1$ would have seen $C_1$). Let $R_d$ be the partition regime at $X_1$ when dup-res occurs, and $R_w$ the partition regime at $X_1$ when $X_1$ writes $C_1$. Note that $R_d$ can not be less than $R_1$ as it leads to one of two possibilities: 1) $R_w = R_d$ which is less than $R_1$ - this violates Condition \textsf{LeaderNotTooNew} of the \textsc{Replica-Write} algorithm as $ER$ is at least $R_2$ (dup res with $Y_1$ already happened) or 2) $R_w \neq R_d$ which means there was a rebalance after dup res but that would have made the $PR$ at least $R_2$ - this is a contradiction to the assumption that $PR$ was $R_1$ at $X_1$ at some point of time (for $C_1$ to have RR of $R_1$).

    So $R_d \geq R_1$ which leads us to the following cases:

    \begin{casesp}
        \item \textbf{$R_d = R_w \in \{R_1, R_2\}$}

        Since dup-res from $Y_1$ to $X_1$ succeeds, $Y_1$ is part of regime $R_d$ which implies it is part of $R_w$ (as they are equal). Since $Y_1$ is a cluster leader in regime $R_2$ that performs dup-res, it is the first node in the succession list (last bullet in Step 5 of Rebalance Algorithm). Therefore, it has to be a roster replica (every cluster in which partition is available has a roster replica). This, in turn, implies it has to be a cluster replica in $R_w$. It will receive the replica write for $C_1$. Contradiction to Assumption~\ref{assumption:tworeplicatedchildren}.

        \item \textbf{$R_d = R_1$, $R_w = R_2$}

        Since $Y_1$ is a cluster replica in regime $R_2$, it must be among the recipients of the replica write for $C_1$. Again, contradiction.
    \end{casesp}

\end{casesp}

\item \textbf{$Y_1$ does not perform dup-res for $KV$ before writing $C_2$.}

In this case, the condition in Line 9 of Algorithm~\ref{alg:clientwrite} evaluates to false. Since $C_2$ is the first version in regime $R_2$, there cannot already be a version with regime $R_2$, and so the only way Line 9 is skipped is if $Y_1$ is full.

By Lemma~\ref{lemma:fullseeseverything}, Y1 would have seen C1 if it was successfully replicated by the time it attempts to write C2 - a contradiction to Assumption~\ref{assumption:tworeplicatedchildren}. If C1 gets replicated successfully without Y1's knowledge after C2 was written, that is a violation of Lemma~\ref{lemma:fullseeseverything}.

\end{casesp}

In all cases, Assumption~\ref{assumption:tworeplicatedchildren} leads to a contradiction. Therefore, a record version cannot have two children that are both replicated.
\end{proof}

\begin{theorem}
\label{theorem:writeextendschain}
    All writes form a linear chain of versions, each write building on the previous version. 
\end{theorem}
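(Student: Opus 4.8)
The plan is to obtain the linear-chain structure as a short corollary of Theorem~\ref{theorem:lineage}, using one auxiliary observation about the \textsc{Client-Write} path. First I would establish the \emph{anchoring property}: every version that ever becomes replicated is built on a parent version that is itself replicated. This is immediate from Algorithm~\ref{alg:clientwrite}, since the leader applies the client's update to its local base copy only at Line 17, and by that point the base version is guaranteed replicated---if it was unreplicated, Lines 12--15 re-replicate it to the cluster replicas and mark it replicated (a step that is a logical no-op on content), and otherwise it was already replicated. Hence the parent of any replicated version is replicated.

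With anchoring in hand, I would combine it with Theorem~\ref{theorem:lineage} (no version has two distinct replicated children) by induction on the number of replicated versions of a record $KV$. The base case is the single initial version. For the inductive step, suppose the replicated versions so far form a chain $V_0 \to V_1 \to \cdots \to V_n$ whose tip $V_n$ is the unique replicated version having no replicated child. When a new version $W$ becomes replicated, anchoring says $W$ is built on some replicated $V_i$; if $i < n$, then $V_i$ would possess two replicated children, $V_{i+1}$ and $W$, contradicting Theorem~\ref{theorem:lineage}. Hence $W$ is built on the tip $V_n$, extending the chain to $V_0 \to \cdots \to V_n \to W$. This settles both halves of the statement at once: the replicated versions are totally ordered in a single chain, and each committed write builds on the immediately preceding (most recent) replicated version.

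The main obstacle---indeed the only nontrivial step---is the claim that a committed write must extend the \emph{tip} rather than some stale replicated ancestor. All of its force comes from Theorem~\ref{theorem:lineage}, whose proof in turn rests on the freshness guarantees of Lemmas~\ref{lemma:dupresR+m}--\ref{lemma:fullseeseverything}: those lemmas ensure that, via duplicate resolution or migration, a leader sees the latest replicated version before issuing a new write, which is precisely what prevents a write from silently anchoring on a non-tip version and thereby forking the chain. Once Theorem~\ref{theorem:lineage} is available, no further case analysis is needed, so I expect this theorem to be brief.
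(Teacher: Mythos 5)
Your proposal is correct, and it is worth noting that the paper itself never writes a proof for this theorem: Theorem~\ref{theorem:writeextendschain} (like Theorem~\ref{theorem:linearizable}) is stated bare, with only the proof-roadmap remark that it follows from the machinery culminating in Theorem~\ref{theorem:lineage}. Your argument supplies exactly the derivation that remark gestures at, and the piece you add explicitly --- the \emph{anchoring} observation that Lines 12--17 of Algorithm~\ref{alg:clientwrite} force every write to be applied on top of a version already marked replicated --- is the ingredient the paper leaves implicit; without it, Theorem~\ref{theorem:lineage} alone would not rule out a committed write anchoring on an unreplicated side branch, so making it explicit is what turns the roadmap's one-line claim into a proof. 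Your induction (new replicated version $W$ has a replicated parent; a parent below the tip would have two replicated children, contradicting Theorem~\ref{theorem:lineage}) is sound, relying only on the monotonicity of ``replicated'' under Definition~\ref{def:replicated} and on the paper's convention that regime re-tagging during re-replication is a content-level no-op, both of which you flag. One small reading choice you make, correctly, is interpreting ``all writes'' as committed (replicated) writes: versions that fail replication are removed and rejected (Lines 26--28 of Algorithm~\ref{alg:clientwrite}), and Theorem~\ref{theorem:lineage} only constrains replicated children, so the chain property can only be asserted for the committed history.
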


\begin{theorem}
\label{theorem:linearizable}
    All reads by LARK are linearizable. 
\end{theorem}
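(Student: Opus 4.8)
The plan is to exhibit a single linearization order of all operations that respects real-time precedence and the single-register (per-key) sequential specification. By Theorem~\ref{theorem:writeextendschain} the committed writes already form a single totally ordered chain $V_0 \to V_1 \to \cdots$, and Theorem~\ref{theorem:lineage} guarantees this chain never forks, so the chain position of each committed version is well defined. I would take this write chain as the backbone of the linearization and insert each read immediately after the write whose version it returns, breaking ties among reads that return the same version by the real-time order of their responses.

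The first substantive step is to fix the linearization point of each write at the instant its version becomes \emph{replicated} (Definition~\ref{def:replicated}) and to argue that this commit order is compatible with real time: if a write $W$ is acknowledged to its client before a write $W'$ is invoked, then $W$ precedes $W'$ in the chain. This holds because the leader serving $W'$ must hold the latest committed version before applying the update---either it is \emph{full} or it performs dup-res---and Lemmas~\ref{lemma:dupresR+m}--\ref{lemma:fullseeseverything} ensure that any already-replicated version, in particular $W$'s, is visible to that leader, so $W'$ extends a version no earlier than $W$.

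The central step is to pin the linearization point of each read $R$ returning version $V_i$ at a moment inside $R$'s execution interval when $V_i$ is the latest committed version. The \textsc{Client-Read} algorithm guarantees the leader holds $V_i$ (via full-status or dup-res) and, crucially, re-verifies via \texttt{check\_regime} at every cluster replica that it is still the leader at the current regime $PR$ before responding. I would use Lemma~\ref{lemma:onlyonecluster} (leader uniqueness) together with Lemma~\ref{lemma:clusterreplicacommon} to argue that a successful \texttt{check\_regime} sweep certifies that no competing cluster has advanced the regime and committed a child of $V_i$ behind the leader's back; hence $V_i$ is genuinely current at the instant the sweep completes, and that instant---lying within the read's invocation--response interval---serves as $R$'s linearization point. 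The two real-time read constraints then follow: if a write $W$ completes before $R$ begins, the access-to-latest machinery forces the serving leader to see $W$'s version, so $R$ returns $V_i$ with $i$ at least the index of $W$; and if a read $R_1$ returning $V_i$ completes before $R_2$ begins, regime monotonicity plus the leadership re-verification prevents $R_2$ from certifying an older version.

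The main obstacle is this final certification argument for reads: I expect the delicate case to be ruling out a \emph{concurrently forming} cluster that commits $V_{i+1}$ during $R$'s execution, since the serving leader may be momentarily unaware of the new regime. The \texttt{check\_regime} sweep is precisely what closes this gap---any replica that has moved to a higher regime because it joined the new cluster will fail the check and force $R$ to be rejected rather than return a stale $V_i$---but making this rigorous requires carefully relating the timing of the sweep, the at-most-one-regime view skew the protocol tolerates, and the commit point of $V_{i+1}$, so that the certified instant provably precedes any real-time-later operation that could have observed $V_{i+1}$.
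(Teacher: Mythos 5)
Your proposal matches the paper's intended argument---and it is worth noting that the paper never actually writes out a proof of this theorem: Theorem~\ref{theorem:linearizable} is stated bare, justified only by the proof roadmap (single-chain lineage from Theorem~\ref{theorem:lineage}, writes extending the chain via Theorem~\ref{theorem:writeextendschain}, reads consistent with that chain) together with the TLA+ specification. Your construction---the committed-write chain as the backbone of the linearization, each read inserted after the version it returns, real-time write-before-write and write-before-read ordering obtained from Lemmas~\ref{lemma:dupresR+m}--\ref{lemma:fullseeseverything}, and the \texttt{check\_regime} sweep pinning each read's linearization point---is precisely the fleshed-out version of that roadmap, so in effect you supply more detail than the paper does.

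One refinement is needed in the delicate step you yourself flag. It is not quite true that ``any replica that has moved to a higher regime because it joined the new cluster will fail the check'': a node whose $ER$ has advanced by joining a new cluster but which has not yet completed rebalance Step~4 retains its old $PR$ and still recognizes the old leader, so it \emph{passes} \texttt{check\_regime}. The argument that closes the gap is causal rather than state-based. For a competing version $V_{i+1}$ to become \emph{replicated} in a later regime, the new leader must either become full via migration (which, by the PR-Match-for-Migration rule, requires the donating old cluster replica to have rebalanced to the new $PR$ first) or complete dup-res against a duplicate (which requires that duplicate's \texttt{NodesInCluster}, and hence its $PR$, to have been updated in rebalance Step~4). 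Either way, some old cluster replica $Z$---one of the very nodes your sweep interrogates---must advance its $PR$ strictly before the commit of $V_{i+1}$. A successful \texttt{check\_regime} at $Z$ at time $t_Z$ therefore certifies $t_Z$ precedes $Z$'s advance, which precedes the commit of $V_{i+1}$; since the rereplication step in \textsc{Client-Read} guarantees $V_i$ is committed before the sweep begins, $t_Z$ is a valid linearization point lying strictly between the two commits and inside the read's interval. With that substitution your sketch is sound.
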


\section{Analytical Availability Model}
\label{app:analytical}

We model per-partition unavailability under independent node failures with small per-node unavailability $u$ (e.g., $u\approx \lambda d$ for Poisson failures with rate $\lambda$ and mean downtime $d$; in our simulator with per-tick failure probability $p$ and deterministic recovery $r$ ticks, $u \approx p\,r$, see below).

\paragraph{LARK.}
With replication factor $RF=f{+}1$, LARK becomes unavailable only if \emph{all} $RF$ roster replicas fail (and the database simultaneously loses majority, a higher-order event negligible at small $u$). The leading-order term is
\begin{equation}
  \Pr[\text{unavail}_\text{LARK}] \approx u^{\,f+1}.
\end{equation}

\paragraph{Raft (fixed $2f{+}1$-replica majority).}
A partition is unavailable if at least $f{+}1$ of its $2f{+}1$ fixed replicas fail:
\begin{equation}
  \Pr[\text{unavail}_\text{Raft}]
  = \sum_{k=f+1}^{2f+1} \binom{2f+1}{k} u^k (1-u)^{2f+1-k}
  \approx \binom{2f+1}{f+1} \, u^{\,f+1},
\end{equation}
approximating by the first term ($k{=}f{+}1$) for small $u$.

\paragraph{Improvement factor.}
The Raft-to-LARK ratio simplifies to the combinatorial multiplier:
\begin{equation}
\label{eqn:lark-raft-unavail-ratio}
  \frac{\Pr[\text{unavail}_\text{Raft}]}{\Pr[\text{unavail}_\text{LARK}]}
  \approx \binom{2f+1}{f+1}
  =
  \begin{cases}
    3 & f{=}1,\\
    10 & f{=}2,\\
    35 & f{=}3~.
  \end{cases}
\end{equation}

\paragraph{Mapping simulator $p$ to $u$.}
With per-tick failure probability $p$ and fixed downtime $r$, an alternating-renewal argument yields
\[
u \;=\; \frac{p\,r}{1+p\,r} \;\approx\; p\,r \quad (p\,r \ll 1).
\]
Substituting gives absolute unavailability and shows that increasing $r$ scales both protocols similarly, leaving the ratio unchanged.

\end{document}